\newtheorem{myPara}{Paradox}
\newtheorem{myDef}{Definition}
\newtheorem{myExp}{Example}
\newtheorem{thm}{Theorem}
\newtheorem{proof}{Proof}
\title{Sentiment Paradoxes in Social Networks: \\Why Your Friends Are More Positive Than You? }
\author{Xinyi Zhou, Shengmin Jin, Reza Zafarani\\
Data Lab, Department of EECS, Syracuse University\\
\{zhouxinyi, shengmin, reza\}@data.syr.edu
}
\begin{document}

\maketitle

\begin{abstract}
Most people consider their friends to be more positive than themselves, exhibiting a \textbf{ \textit{Sentiment Paradox}}. Psychology research attributes this paradox to human cognition bias. With the goal to understand this phenomenon, we study sentiment paradoxes in social networks. Our work shows that social connections (friends, followees, or followers) of users are indeed (not just illusively) more positive than the users themselves. This is mostly due to positive users having more friends. We identify five sentiment paradoxes at different network levels ranging from triads to large-scale communities. Empirical and theoretical evidence are provided to validate the existence of such sentiment paradoxes. By investigating the relationships between the sentiment paradox and other well-developed network paradoxes, i.e., \textit{friendship paradox} and \textit{activity paradox}, we find that user sentiments are positively correlated to their number of friends but rarely to their social activity. Finally, we demonstrate how sentiment paradoxes can be used to predict user sentiments.
\end{abstract}

\section{Introduction} 
\label{sec::intro}
\emph{Sentiment analysis}, also known as \emph{opinion mining}, analyzes individual opinions, sentiments, and attitudes towards various entities such as individuals, products, organizations, and topics~\cite{liu2012sentiment}. Relying on advancements in natural language processing and machine learning~\cite{ravi2015survey}, existing studies in sentiment analysis have made substantial progress towards classifying and predicting sentiments of \underline{independent} individuals and groups in social networks, focusing on tasks such as content sentiment prediction and review spam detection~\cite{breck2017opinion}. 

However, existing studies have less explored sentiments among \emph{interacting} users as their sentiments may be dependent. 
With the unavoidable peer influence in social networks~\cite{lewis2012social}, it is essential to consider user interactions when studying their sentiments, especially in large-scale social networks. 
For example, Lin et al. find that the stress levels of users are closely related to that of their friends on social media~\cite{lin2017detecting}. 
A common observation with respect to sentiments of interacting users is that many users feel their friends are more positive than themselves, experiencing a \textit{sentiment paradox}. 
There have been many discussions on why this phenomenon takes place, with psychology research linking it to human cognition biases. For example, Jordan et al.~\shortcite{jordan2011misery} show that most people have a tendency to underestimate the negative feelings of others. With many users in social networks experiencing a sentiment paradox -- being less positive than their friends -- can we attribute all such perceptions to human cognition biases alone? In other words, \textit{do sentiment paradoxes exist not only in user cognition, but also in reality?} \vspace{1mm}

\noindent\textbf{The Present Work: Sentiment Paradoxes in Networks.} We investigate whether users are indeed less positive than their social connections (friends, followees, or followers) in social networks. Possible interpretations for the existence (or non-existence) of the sentiment paradox are provided by mining the relationships between sentiment paradoxes and other well-established network paradoxes, i.e., friendship paradox and activity paradox. Finally, as an application, we show how sentiment paradoxes can be used to predict user sentiments (positive or negative).

Overall, the specific contributions of this paper are:
\begin{enumerate}
\item Five sentiment paradoxes are identified in both undirected (friend) and directed (follower and followee) social networks and at multiple network levels (triad-, community-, and network-level). Our work shows that for most users their friends are indeed more positive than them, mostly due to the fact that more positive users are more likely to have more friends, followers, and followees; 
\item We empirically and mathematically verify each paradox, where our mathematical analysis allows us to determine whether such a paradox is \textit{expected} to exist;
\item We investigate the connections between the sentiment paradox and two other well-established network paradoxes: friendship paradox and activity paradox. Our results reveal factors that can determine the (1) existence and the (2) magnitude of sentiment paradoxes; and
\item We demonstrate the role that the sentiment paradox that can play in practical applications, i.e., in predicting s user's sentiments by looking at the sentiments of his or her social connections.
\end{enumerate}

The remainder of the work is organized as follows. Experimental setup is presented first, followed by a formal definition for the general sentiment paradox, and sentiment paradoxes in triads and communities. Then, we investigate the connections of sentiment paradox to other network paradoxes, which helps determine the existence and magnitude of sentiment paradoxes. One application of sentiment paradoxes, i.e., user sentiment prediction, is provided next. Finally, a literature review and some conclusions are provided.

\section{Experimental Setup}
\label{sec:experiments}

To study sentiment paradoxes at different network levels, proper data that contains user sentiments and their network information (e.g., friends or communities joined) is required.

\vspace{1em}
\noindent
\textbf{Dataset.} 
\footnote{The data is released at: \url{http://data.syr.edu/get/EmotionPatterns}}
We have crawled a large-scale dataset from LiveJournal~\cite{zafarani2009social,jin2017emotions}. LiveJournal is a popular blogging and social networking site, where users can maintain a blog, journal, or a diary. Data collected from LiveJournal has several advantages:
\begin{enumerate}
\item Sentiments are directly provided: when posting blogs, users can report their sentiments by selecting a \emph{mood} (e.g., excited, busy or angry, see Appendix for a sample user post with its mood), which provides access to sentiment ground truth; 
\item Both undirected (friends) and directed (followees and followers) relationships of users exist, i.e., a directed and an undirected network. Note that these relationships are separate: a user can choose to subscribe (follow) another person without approval, and/or befriend (with approval) so that the two users can share some private posts. Hence, two users can follow each other (i.e., two directed edges in the directed network), but not be friends (no edge between them in the undirected network); and 
\item Community membership information is explicitly available on user profiles (i.e., no need to detect them using community detection, which can be subjective~\cite{fortunato2010community}). User can decide to create or join communities. Each community is often related to some topic and users in the same community often share similar interests. 
\end{enumerate}

We have collected the following data spanning more than 10 years (from 1999 to 2010)~\cite{zafarani2009social,jin2017emotions}:
(i)~users and their posts to obtain user sentiments;
(ii)~friendships and followee/follower relationships among users; and
(iii)~community memberships for all users. 
We only retain users with ten or more posts to exclude occasionally active or inactive users. We plot the post distribution of these excluded users, which is provided in the Appendix and indicates that most ($\sim$97\%) of these users have posted nothing. As moods are limited (132 moods), we manually convert each mood in our dataset to its sentiment polarity (details can be seen in the Appendix): positive~($+$, e.g., cheerful, excited and happy), negative~($-$, e.g., angry, annoyed and depressed) or neutral~($0$, e.g., busy). 
Some statistics on our data is provided in Table~\ref{tab::statistics}. 

\begin{table}[t]
\centering
\caption{Data Statistics}
\label{tab::statistics}
\begin{tabular}{lr}
 \toprule[1pt]
 \textbf{Data} & \textbf{Number} \\ \hline
 {\# Users} 		& 115,444         \\ 
 {\# User Posts}  & 12,404,868      \\ 
 {\# Friendships}							& 246,164         \\ 
 {\# Followee/Follower Relationships} & 793,948         \\ 
 {\# Triads (Undirected)} 				& 262,036         \\ 
 {\# Triads (Directed)}					& 7,264,770       \\ 
 {\# Communities} 	& 200,208         \\
 {\# Community Memberships}				& 2,473,074       \\ \bottomrule[1pt]
\end{tabular}
\end{table}

\vspace{1em}
\noindent
\textbf{User Sentiments.} Traditionally, to obtain user sentiment values, one can rely on self-assessment surveys, which is time-consuming for large number of users. Here, we adopt an automatic way by investigating the historical posts of users (see Definition~\ref{def::swb})~\cite{bollen2011happiness}. 

\begin{myDef}[Subjective Well-Being (SWB)\footnotemark]
\label{def::swb}
Assume user $u$ has $N_p(u)$ positive posts and $N_n(u)$ negative posts. The SWB value of $u$, denoted by $S(u)$, is
\begin{equation}
S(u) = \frac{N_p(u)-N_n(u)}{N_p(u)+N_n(u)}.
\end{equation}
\end{myDef}

Note that $S(u)\in[-1,+1]$, where $-1$ shows an extremely negative user and $+1$, an extremely positive user.

\footnotetext{Strictly, what we study is a component of the SWB rather than SWB itself as it includes both affective and cognitive parts.}

\vspace{1em}
\noindent
\textbf{Sentiment Distribution.} The distribution of user sentiments can be obtained by plotting the SWB distribution. As observed in Figure~\ref{fig::swbDist}, the SWB distribution approximately follows a normal distribution $\mathcal{N}(\mu,\sigma^2)$, which aligns with findings on sentiment distributions in other social networks (e.g., that of Twitter~\cite{ferrara2015quantifying}).
Using a normal fit, we obtain the SWB distribution, which is $\mathcal N(0,0.08)$.

\begin{figure}[t]
    \centering
    \includegraphics[width=\columnwidth]{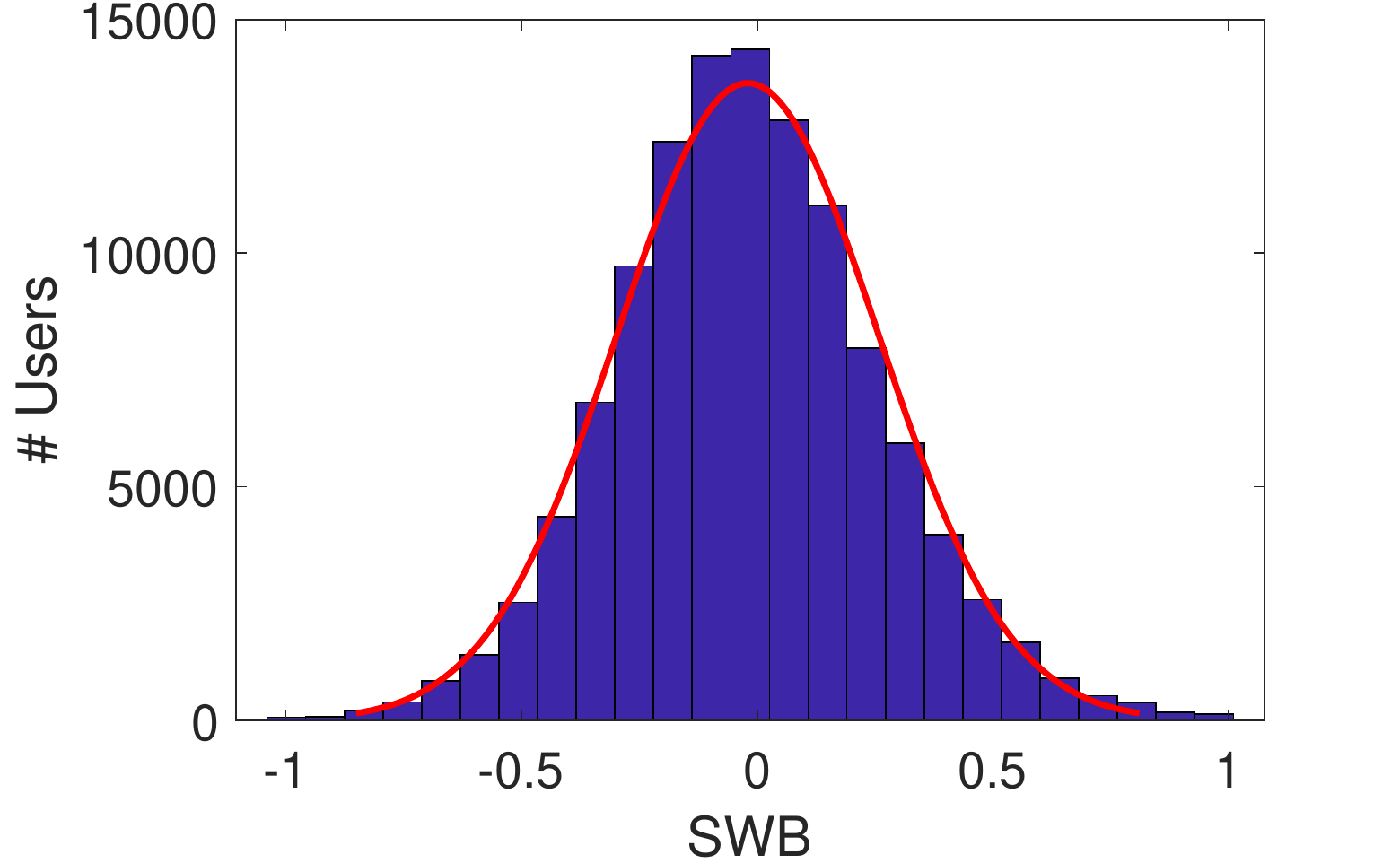}
    \caption{User Sentiment Distribution (SWB values)}
    \label{fig::swbDist}
\end{figure}

\section{Sentiment Paradox}
\label{sec::gsp}
In this section, we mainly focus on a ``general'' sentiment paradox, which can be observed among all users of a network. We first present the definition of sentiment paradox, followed by experiments to verify its existence and mathematical proofs on whether the paradox is \textit{expected} to exist. 

\vspace{1em}
\noindent
\textbf{Definition.} The sentiment paradox, or \textit{network sentiment paradox}, can be summarized as

\begin{myPara}[Sentiment Paradox]
\label{par::gsp}
Your friends, followees, or followers are more positive than you.
\end{myPara}

\vspace{1em}
\noindent
\textbf{Empirical Verification.} 
To verify whether the sentiment paradox exists, we take the following three steps:\vspace{1mm}

\noindent \textit{I. User Sentiment Assignment.} We calculate how positive or negative users are by computing their SWB (Definition \ref{def::swb}).\vspace{1mm}

\noindent \textit{II. Computing Paradox Magnitude.} Consider a user whose SWB value is less than the (i) mean or (ii) median of the SWB values of his or her connections. We can consider three types of connections: friends, followees, or followers. We consider this user as being less positive than his or her connections and denote the proportion of such users in a social network as the sentiment paradox \emph{magnitude}:

\begin{myDef}[Sentiment Paradox Magnitude]
\label{def::paraMagnitude}
Consider a social network with a set of users $U=\{u_i\}$, $i=1,2,\cdots,n$, each with a SWB value $S(u_i)$. For each user $u_i$, we denote her connections (either friends, followers, or followees) by $c_{ij}$, $j=1,2,\cdots,m$. 
The sentiment paradox magnitude of the network is calculated by
\begin{equation}
M=\frac{\sum_{u_i}I(\,S(u_i) < \bar{S}(c_{ij})\,)}{||U||},
\end{equation}
where $I(a<b)=1$ when $a<b$ and is 0 otherwise. The value $\bar{S}(c_{ij})$ is the (i) mean or the (ii) median of $S(c_{ij})$'s.
\end{myDef} 

When the magnitude is greater than 0.5, we say the sentiment paradox \emph{strongly holds} in the network. When the magnitude is less than or equal to 0.5, but is still greater than the proportion of users that are more positive than their connections, and that of users that are as positive as their connections, we say the paradox \emph{weakly holds} in the network.\vspace{1mm}

\noindent \textit{III. Assessing Statistical Significance.} To assess the statistical significance of our findings, we compute the difference between the \textit{observed} and \textit{expected} paradox magnitudes. To compute the expected paradox magnitude, we maintain the SWB distribution of users and their network structure, but randomly assign a SWB value to each user. After random assignments, we recalculate the paradox magnitude. We conduct this experiment 1,000 times, and record the average magnitude, which is the \emph{expected} paradox magnitude.
To assess how significant the difference between observed and expected paradox magnitudes is, we compute \emph{surprise}~\cite{leskovec2010signed}:

\begin{myDef}[Surprise]
\label{def::surprise}
In a social network with $N$ users, if paradox magnitude is $M$ and expected paradox magnitude is $M_{\mathsf{Expected}}$ ($M_{\mathsf{Expected}}\neq 0$ and $1$), the surprise value is
\begin{equation}
\mathsf{surprise} = \frac{N(M-M_{\mathsf{Expected}})}{\sqrt{N\cdot M_{\mathsf{Expected}} (1-M_{\mathsf{Expected}})}}.
\end{equation}
\end{myDef}

 A surprise value on the order of tens is highly significant, indicating that $p$-values are nearly zero.\vspace{2mm}

Following this three-step process, we obtain the results in Table~\ref{tab::resultGSP}, where ``Holds'' (``Does not hold'') indicates that users are less (more) positive than their connections. ``Unknown'' indicates that users are as positive as their connections. 
In both undirected and directed networks, irrespective of using mean or the median, we make the following three observations:
\begin{enumerate}
    \item Sentiment paradox strongly holds within the network, as the observed sentiment paradox magnitudes (user proportions) for all networks are greater than 0.5; 
    \item The observed paradox magnitude values are all higher than the expected paradox magnitudes; and 
    \item The surprise values are on the order of tens, which indicate that the observed paradox magnitudes are all statistically significant.
\end{enumerate}  

\begin{table}[t]
\centering
\caption{Empirical Verification of Sentiment Paradox. The observed proportions are greater than 0.5 indicates that the sentiment paradox strongly holds within networks. The observed proportions are higher than the expected ones, where such difference is statistically significant as the surprise values are on the order of tens.}
\label{tab::resultGSP}
\subtable[User Sentiments vs. \underline{Average} Sentiments of Connections]{
\centering
\label{subtab::meanGSP}
\resizebox{.98\columnwidth}{!}{
\begin{tabular}{clrrrr}
 \toprule[1pt]
\multicolumn{2}{c}{\multirow{2}{*}{{\textbf{Sentiment Paradox}}}} & \multicolumn{2}{c}{{\textbf{Observed}}} & \multicolumn{1}{c}{\textbf{Exp.}} & \multirow{2}{*}{{\textbf{Surprise}}} \\ \cline{3-4} 
\multicolumn{2}{c}{} & \multicolumn{1}{c}{{\textbf{\#Users}}} & \multicolumn{1}{c}{{\textbf{Prop.}}} & \multicolumn{1}{c}{\textbf{Prop.}} &  \\ \hline 
\multirow{4}{*}{\rotatebox{90}{\small\textbf{Friends}}} & {\textbf{Holds}} & 43,786 & 55.11\% & 50.10\%   & 28.21 \\  
 & {\textbf{Does not hold}} & 35,588 & 44.79\% & 49.90\% & -28.77 \\  
 & {\textbf{Unknown}} & 79 & 0.10\% & 0.00\%   & - \\  \cline{2-6}
 & {\textbf{Total}} & 79,453 & 100\% & 100\%   & - \\ \midrule[0.5pt] 
\multirow{4}{*}{\rotatebox{90}{\small\textbf{Followees}}} & {\textbf{Holds}} & 44,336 & 54.11\% & 49.50\%   & 26.41 \\ 
 & {\textbf{Does not hold}} & 36,699 & 44.79\% & 49.41\%  & -26.50 \\
 & {\textbf{Unknown}} & 906 & 1.10\% & 1.09\%  & 0.47 \\ \cline{2-6}
 & {\textbf{Total}} & 81,941 & 100\% & 100\%   & -  \\ \midrule[0.5pt]  
\multirow{4}{*}{\rotatebox{90}{\small \textbf{Followers}}} & {\textbf{Holds}} & 26,287 & 52.87\% & 49.58\%  & 14.67 \\  
 & {\textbf{Does not hold}} & 23,015 & 46.29\% & 49.60\% & -14.74 \\
 & {\textbf{Unknown}} & 420 & 0.84\% & 0.82\%  & 0.40 \\ \cline{2-6} 
 & {\textbf{Total}} & 49,722 & 100\% & 100\% & - \\ \bottomrule[1pt]
\end{tabular}}} 
\subtable[User Sentiments vs. \underline{Median} Sentiments of Connections]{
\centering
\label{subtab::medianGSP}
\resizebox{.98\columnwidth}{!}{
\begin{tabular}{clrrrr}
\toprule[1pt]
\multicolumn{2}{c}{\multirow{2}{*}{\textbf{Sentiment Paradox}}} & \multicolumn{2}{c}{{\textbf{Observed}}} & \multicolumn{1}{c}{\textbf{Exp.}} & \multirow{2}{*}{{\textbf{Surprise}}} \\ \cline{3-4} 
\multicolumn{2}{c}{} & \multicolumn{1}{c}{{\textbf{\#Users}}} & \multicolumn{1}{c}{{\textbf{Prop.}}} & \multicolumn{1}{c}{\textbf{Prop.}} &  \\ \hline 
\multirow{4}{*}{\rotatebox{90}{\small\textbf{Friends}}} & {\textbf{Holds}} & 43,621 & 54.90\% & 50.00\%  & 27.61 \\  
 & {\textbf{Does not hold}} & 35,684 & 44.91\% & 49.96\%  & -28.44 \\ 
 & {\textbf{Unknown}} & 148 & 0.19\% & 0.04\%  & 20.76 \\  \cline{2-6}  
 & {\textbf{Total}} & 79,453 & 100\% & 100\%   & - \\ \midrule[0.5pt]
\multirow{4}{*}{\rotatebox{90}{\small\textbf{Followees}}} & {\textbf{Holds}} & 43,311 & 52.86\% & 49.00\%  & 22.13 \\  
 & {\textbf{Does not hold}} & 36,789 & 44.90\% & 48.91\% & -23.01 \\ 
 & {\textbf{Unknown}} & 1,841 & 2.24\% & 2.09\%  & 3.07 \\ \cline{2-6} 
 & {\textbf{Total}} & 81,941 & 100\% & 100\% & -  \\ \midrule[0.5pt]
\multirow{4}{*}{\rotatebox{90}{\small\textbf{Followers}}} & {\textbf{Holds}} & 25,542 & 51.37\% & 48.92\%  & 10.92 \\  
 & {\textbf{Does not hold}} & 23,073 & 46.40\% & 48.91\%  & -11.17 \\  
 & {\textbf{Unknown}} & 1,107 & 2.23\% & 2.17\%  & 0.85 \\ \cline{2-6}
 & {\textbf{Total}} & 49,722 & 100\% & 100\% & - \\ \bottomrule[1pt]
\end{tabular}}} 
\end{table}

\begin{table*}[t]
\Huge
\centering
\caption{Sentiment Paradoxes at the Triad and Community Levels}
\label{tab::triad_community}
\resizebox{\textwidth}{!}{
\begin{tabular}{clrrrr|rrrr|rrrr|rrrr}
\toprule[2pt]
 &  & \multicolumn{4}{c|}{\textbf{Triad Sentiment Paradox}} & \multicolumn{4}{c|}{\makecell{\textbf{Common-neighbor} \\ \textbf{Sentiment Paradox}}} & \multicolumn{4}{c|}{\textbf{Community Sentiment Paradox}} & \multicolumn{4}{c}{\makecell{\textbf{Common-interest} \\ \textbf{Sentiment Paradox}}} \\ \cline{3-18}
 &  & \multicolumn{2}{c}{\textbf{Observed}} & \multicolumn{1}{c}{{\textbf{Exp.}}} & \multirow{2}{*}{{\textbf{Surp.}}} & \multicolumn{2}{c}{\textbf{Observed}} & \multicolumn{1}{c}{{\textbf{Exp.}}} & \multirow{2}{*}{{\textbf{Surp.}}} & \multicolumn{2}{c}{\textbf{Observed}} & \multicolumn{1}{c}{{\textbf{Exp.}}} & \multirow{2}{*}{{\textbf{Surp.}}} & \multicolumn{2}{c}{\textbf{Observed}} & \multicolumn{1}{c}{{\textbf{Exp.}}} & \multirow{2}{*}{{\textbf{Surp.}}} \\ \cline{3-4} \cline{7-8} \cline{11-12} \cline{15-16}
 &  & \textbf{\#Users} & \multicolumn{1}{c}{{\textbf{Prop.}}} & \multicolumn{1}{c}{{\textbf{Prop.}}} &  & \textbf{\#Users} & \multicolumn{1}{c}{{\textbf{Prop.}}} & \multicolumn{1}{c}{{\textbf{Prop.}}} &  & \textbf{\#Users} & \multicolumn{1}{c}{{\textbf{Prop.}}} & \multicolumn{1}{c}{{\textbf{Prop.}}} &  & \textbf{\#Users} & \multicolumn{1}{c}{{\textbf{Prop.}}} & \multicolumn{1}{c}{{\textbf{Prop.}}} &  \\ \midrule
\multirow{4}{*}{\rotatebox{90}{\Large\textbf{Friends}}} & \textbf{Holds} & 11,044 & 52.52\% & 48.41\% & 11.91 & 11,333 & 53.89\% & 50.07\% & 11.10 & 20,381 & 51.19\% & 49.10\% & 12.21 & 20,742 & 53.12\% & 49.88\% & 12.77 \\
 & \textbf{Does not hold} & 9,298 & 44.22\% & 48.27\% & -11.74 & 9,695 & 46.11\% & 49.93\% & -11.10 & 17,887 & 45.80\% & 48.93\% & -12.37 & 18,254 & 46.75\% & 50.06\% & -13.09 \\
 & \textbf{Unknown} & 686 & 3.26\% & 3.32\% & -0.47 & 0 & 0.00\% & 0.00\% & - & 783 & 2.01\% & 1.97\% & -0.57 & 55 & 0.14\% & 0.06\% & 6.67 \\ \cline{2-18}
 & \textbf{Total} & 21,028 & 100\% & 100\% & - & 21,028 & 100\% & 100\% & - & 39,051 & 100\% & 100\% & - & 39,051 & 100\% & 100\% & - \\ \midrule
\multirow{4}{*}{\rotatebox{90}{\Large\textbf{Followees}}} & \textbf{Holds} & 37,108 & 53.58\% & 49.03\% & 23.95 & 37,698 & 54.43\% & 50.01\% & 23.27 & 19,176 & 50.44\% & 46.96\% & 13.60 & 20,081 & 52.82\% & 48.78\% & -15.75 \\
 & \textbf{Does not hold} & 30,380 & 44.51\% & 49.00\% & -23.64 & 31,564 & 45.57\% & 49.99\% & -23.27 & 16,502 & 43.40\% & 46.99\% & -14.03 & 16,985 & 44.67\% & 48.75\% & -15.89 \\
 & \textbf{Unknown} & 1,326 & 1.91\% & 1.97\% & -1.14 & 2 & 0.00\% & 0.00\% & - & 2,341 & 6.16\% & 6.05\% & 0.90 & 953 & 2.51\% & 2.47\% & 0.43 \\ \cline{2-18}
 & \textbf{Total} & 69,264 & 100\% & 100\% & - & 69,264 & 100\% & 100\% & - & 38,019 & 100\% & 100\% & - & 38,019 & 100\% & 100\% & - \\ \midrule
\multirow{4}{*}{\rotatebox{90}{\Large\textbf{Followees}}} & \textbf{Holds} & 39,496 & 52.61\% & 49.08\% & 19.35 & 39,952 & 53.22\% & 50.01\% & 17.59 & 13,314 & 49.95\% & 46.92\% & 9.91 & 13,872 & 52.04\% & 48.77\% & 10.69 \\
 & \textbf{Does not hold} & 34,183 & 45.54\% & 49.07\% & -19.35 & 35,116 & 46.78\% & 49.99\% & -17.59 & 11,733 & 44.01\% & 47.18\% & -10.37 & 12,127 & 45.49\% & 48.79\% & -10.79 \\
 & \textbf{Unknown} & 1,389 & 1.85\% & 1.85\% & 0 & 0 & 0.00\% & 0.00\% & - & 1,611 & 6.04\% & 5.90\% & 0.97 & 659 & 2.47\% & 2.44\% & 0.32 \\ \cline{2-18}
 & \textbf{Total} & 75,068 & 100\% & 100\% & - & 75,068 & 100\% & 100\% & - & 26,658 & 100\% & 100\% & - & 26,658 & 100\% & 100\% & - \\ \bottomrule[2pt]
\end{tabular}}
\end{table*}

\vspace{0.5em}
\noindent
\textbf{Theoretical Verification.} We observe empirically from Table~\ref{tab::resultGSP} that the expected magnitudes for the sentiment paradox to hold and not hold are almost the same, indicating that the paradox is not expected to exist within networks. Theorem~\ref{thm::gsp} theoretically justifies this empirical observation.

\begin{thm}
\label{thm::gsp}
If the SWB values of users in a network follow a normal distribution $\mathcal N(\mu,\sigma^2)$, the SWB value of a user is expected to be equal to the (i) mean and (ii) median of SWB values of his connections (friends, followees, or followers), i.e., a user is expected to be as positive as his connections.
\end{thm}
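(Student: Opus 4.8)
The plan is to model the randomized SWB assignment as drawing each user's value independently from $\mathcal{N}(\mu,\sigma^2)$, so that a user $u$ together with her $m$ connections $c_1,\dots,c_m$ carries i.i.d.\ values $S(u),S(c_1),\dots,S(c_m)$. The claim then reduces to showing that $\mathbb{E}[S(u)]$ coincides with the expected \emph{mean} (part i) and the expected \emph{median} (part ii) of the connection values. Because the network is held fixed and values are assigned independently of it, I would condition on the degree $m$ of $u$ and treat it as a constant throughout the argument.

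For part (i), I would apply linearity of expectation. Each draw satisfies $\mathbb{E}[S(c_j)]=\mu$, so the sample mean $\bar S=\frac{1}{m}\sum_{j=1}^m S(c_j)$ obeys $\mathbb{E}[\bar S]=\mu=\mathbb{E}[S(u)]$. This step uses only the existence of a common mean and not normality per se, so the equality of expected values here is in fact distribution-free.

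For part (ii), I would exploit the \emph{symmetry} of the normal law about $\mu$: if $X\sim\mathcal{N}(\mu,\sigma^2)$ then $X-\mu$ and $\mu-X$ are identically distributed. Reflecting every coordinate through $\mu$ via $X_i\mapsto 2\mu-X_i$ leaves the joint law of the i.i.d.\ vector $(S(c_1),\dots,S(c_m))$ unchanged, while the sample median transforms equivariantly as $\widetilde S\mapsto 2\mu-\widetilde S$. Hence $\widetilde S-\mu\stackrel{d}{=}\mu-\widetilde S$, so $\widetilde S$ is symmetric about $\mu$ and $\mathbb{E}[\widetilde S]=\mu=\mathbb{E}[S(u)]$ (the expectation exists since normal order statistics have finite moments). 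The same reflection handles both odd $m$ and even $m$, the latter's median being an average of the two central order statistics and thus equally equivariant.

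The main obstacle---and the only place where the hypothesis truly bites---is part (ii): unlike the mean, a sample median need not equal the population center, so the argument genuinely depends on the symmetry of $\mathcal{N}(\mu,\sigma^2)$ rather than on any summation identity. A secondary issue is the gap between this idealized i.i.d.\ model and the paper's permutation-based randomization (sampling without replacement from the empirical SWB values); I would note that exchangeability keeps the expected mean exactly $\mu$, and that for a large network the empirical SWB distribution is nearly symmetric, so the median conclusion holds to close approximation---consistent with the near-identical expected ``Holds'' and ``Does not hold'' proportions reported in Table~\ref{tab::resultGSP}.
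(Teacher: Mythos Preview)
Your proposal is correct and aligns with the paper's own argument: both treat the SWB values as i.i.d.\ draws from $\mathcal{N}(\mu,\sigma^2)$ and use linearity of expectation to conclude that the expected sample mean of the connections equals $\mu=\mathbb{E}[S(u)]$. The only noteworthy difference is in part~(ii). The paper dispatches the median case in one sentence, observing that ``the median and the mean are the same value in a normal distribution,'' which strictly speaking concerns the \emph{population} median rather than the expectation of the \emph{sample} median. Your reflection argument---that $X_i\mapsto 2\mu-X_i$ preserves the joint law while sending the sample median $\widetilde S$ to $2\mu-\widetilde S$, forcing $\mathbb{E}[\widetilde S]=\mu$---fills that gap cleanly and makes explicit that symmetry (not normality per se) is what drives the median conclusion. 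Your closing remark on the distinction between the i.i.d.\ model and the paper's permutation-based null is also a fair caveat that the paper does not address.
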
 
\begin{proof}
Assume random variable $S \in [-1,1]$, which denotes the SWB values of users, follows a normal distribution $\mathcal N(\mu,\sigma^2)$. 
Assume we sample $n$ times from this distribution, where $n$ is the number of users in the network. For each user $u_i$,  $i=1,2,\cdots,n$, we have two sample sets: (i) $S_u^i$, with size one, as the SWB value of user $u_i$; (ii) $S_f^i$, as the SWB values of the connections (friends, followees, or followers) of user $u_i$.
Assume $\bar{S}_u$ denote the sample mean from the sample $S_u^i$, and $\bar{S}_f$ is the sample mean from samples $S_f^i$. Note that $\bar{S}_u = S(u_i)$ as $||S_u^i||=1$. $S \sim \mathcal N(\mu,\sigma^2)$ indicates $\bar{S}_u \sim \mathcal N(\mu,\sigma^2)$ and $\bar{S}_f \sim \mathcal N(\mu,\sigma_c^2)$, for some $\sigma_c$. Hence, $E(\bar{S}_u)=E(\bar{S}_f)=\mu$ and $E(\bar{S}_u-\bar{S}_f)=0$, which indicates that the expected SWB values of users are equal to the expected average SWB values of their connections. For the median, the proof is similar as the median and the mean are the same value in a normal distribution. 
\end{proof}

\section{Sentiment Paradox in Triads}
\label{sec:triadLevel}

Triads (a group of three connected people) are crucial components of networks, especially when investigating ideas such as structural balance (i.e., a friend of a friend is a friend), clusterability (i.e., friends form small groups) and transitivity (i.e., $A$ is a friend of $B$, $B$ is a friend of $C$, so $A$ is a friend $C$). In this section, we study sentiments among interacting users in triads. We will investigate if a sentiment paradox exists at the triad level, and aim to provide explanations on the existence (or lack) of such a paradox.

To explore if the sentiment paradox holds within triads, assume user $u_i$, $i=1,2,\cdots,n$ is a member of triads $t_j$, $j=1,2,\cdots,m$. Within each $t_j$, we compare the sentiment (i.e., the SWB value) of $u_i$ and the mean and median of that of his two connections (friends, followees, or followers). Note at the triad level, the results based on either the mean or the median should be the same because each user has no more than two connections within a triad.
If in the majority of triads that $u_i$ is a member of, $u_i$ is less positive than his connections, we consider $u_i$ as a user exhibiting sentiment paradox at the triad level.
Then, we compute the proportion of such users in the overall network, and conduct significance analysis similar to how it was conducted in last section. Note such paradox is not expected to exist, as proved in Theorem~\ref{thm::gsp}. However, Table~\ref{tab::triad_community} provides the empirical results, which can be summarized as:

\begin{myPara}[Triad Sentiment Paradox]
\label{par::TNSP}
{Your friends, followees, or followers in a triad are more positive than you.}
\end{myPara}

On the other hand, one can think of a triad as a pair of users sharing a \textit{common neighbor}. This observation motivates us to verify whether there is a sentiment paradox between users and their connections with whom users share common neighbors. Hence, we conduct an experiment similar to the one performed to validate the sentiment paradox in last section, except that we only compare sentiments between users and a subset of their connections with whom users share a triad, i.e., have at least one common neighbor. The paradox is not expected to exist either, as proved in Theorem~\ref{thm::gsp}. However, the empirical results in Table~\ref{tab::triad_community} show that:

\begin{myPara}[Common-neighbor Sentiment Paradox]
\label{par::triad3}
Your friends (followees or followers) with whom you share friends (followees or followers) are more positive than you.
\end{myPara}

\begin{figure*}[t]
 \centering
  \subfigure[Friendships]{
  \begin{minipage}{0.275\textwidth}
  	\centering
  	\includegraphics[width=\textwidth]{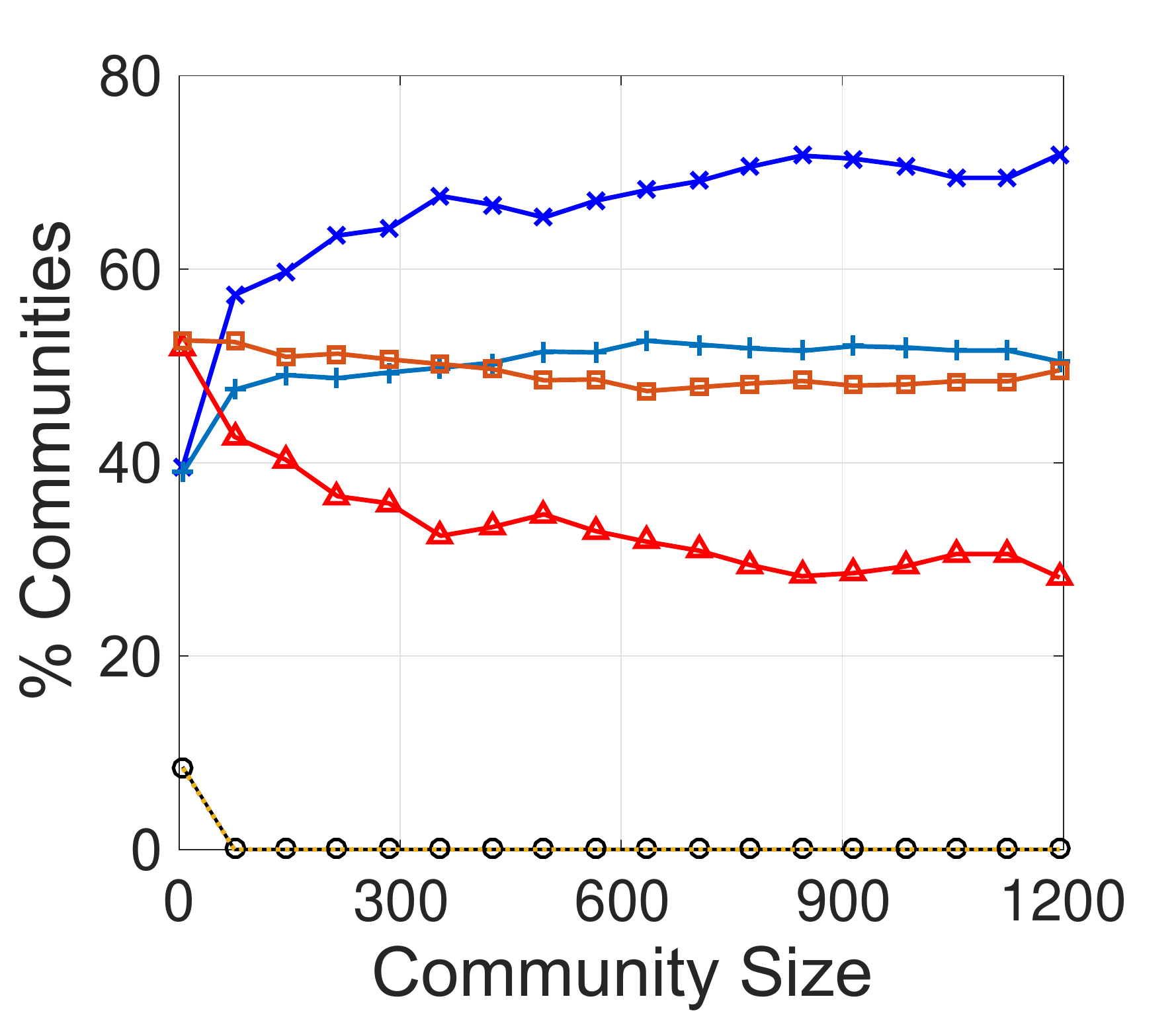}\\
  	\includegraphics[width=\textwidth]{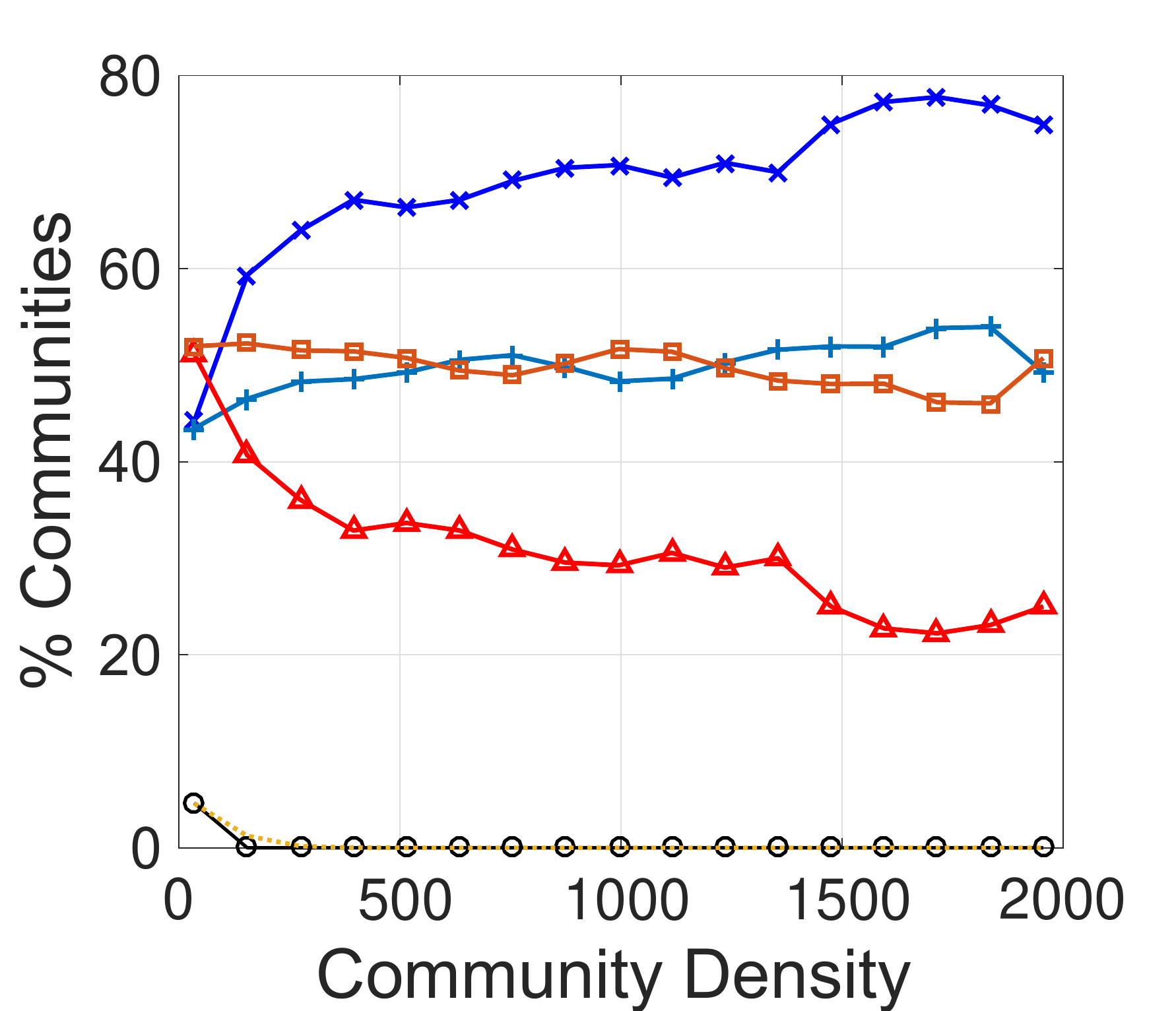}\\
  \end{minipage}}
  \subfigure[Followees]{
  \begin{minipage}{0.275\textwidth}
  	\centering
  	\includegraphics[width=\textwidth]{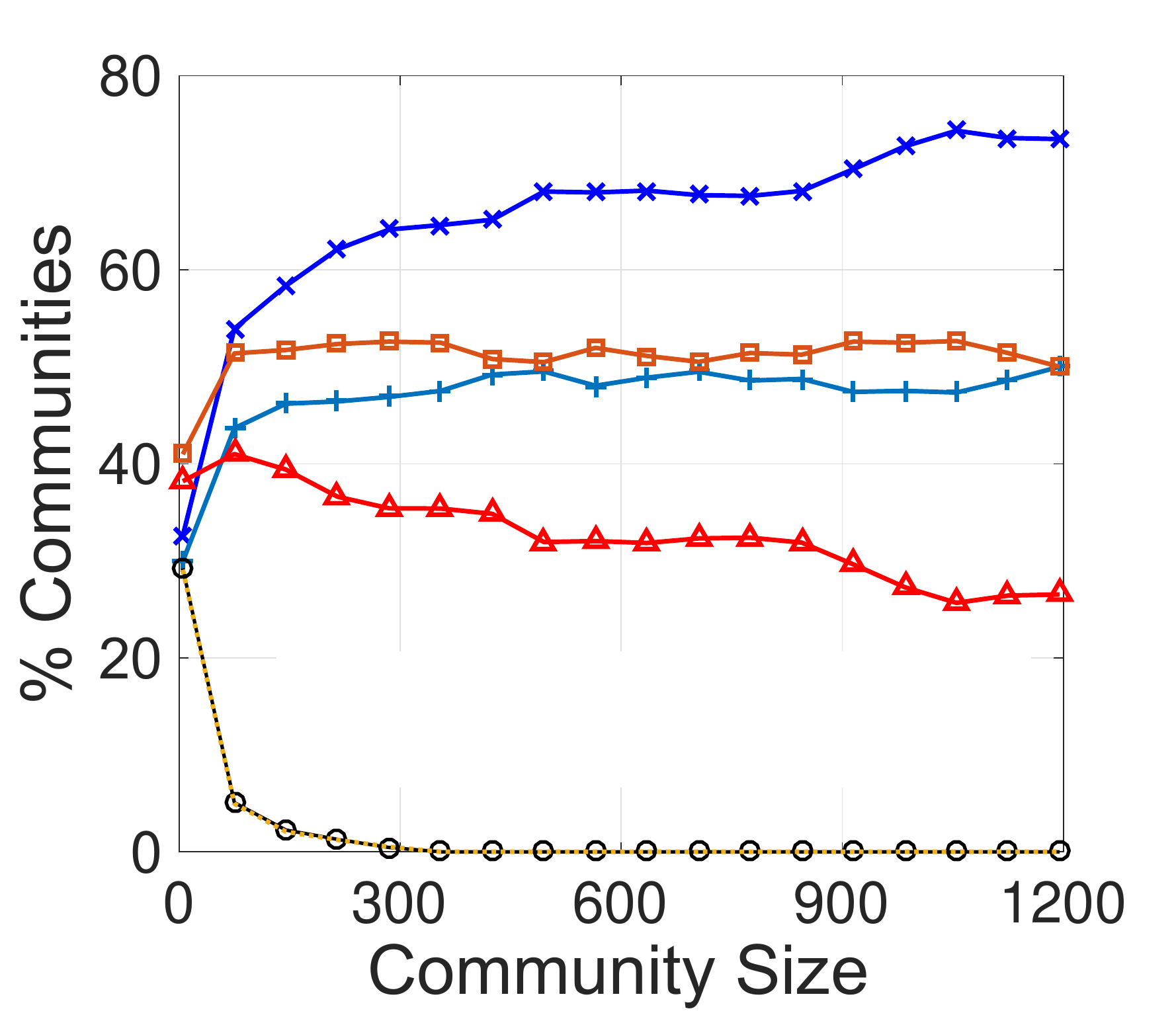}\\
  	\includegraphics[width=\textwidth]{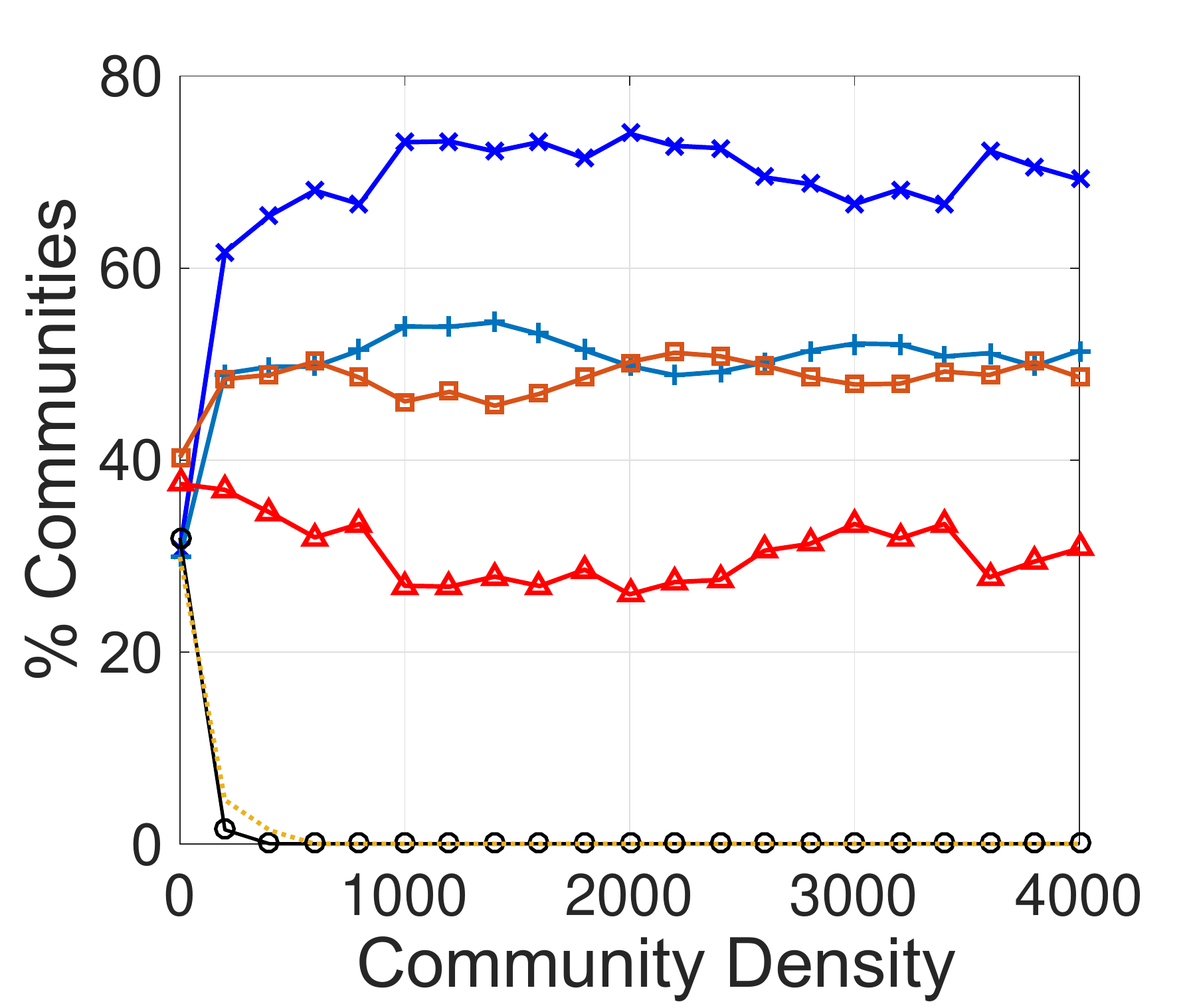}\\
  \end{minipage}}
  \subfigure[Followers]{
  \begin{minipage}{0.275\textwidth}
  	\centering
  	\includegraphics[width=0.96\textwidth]{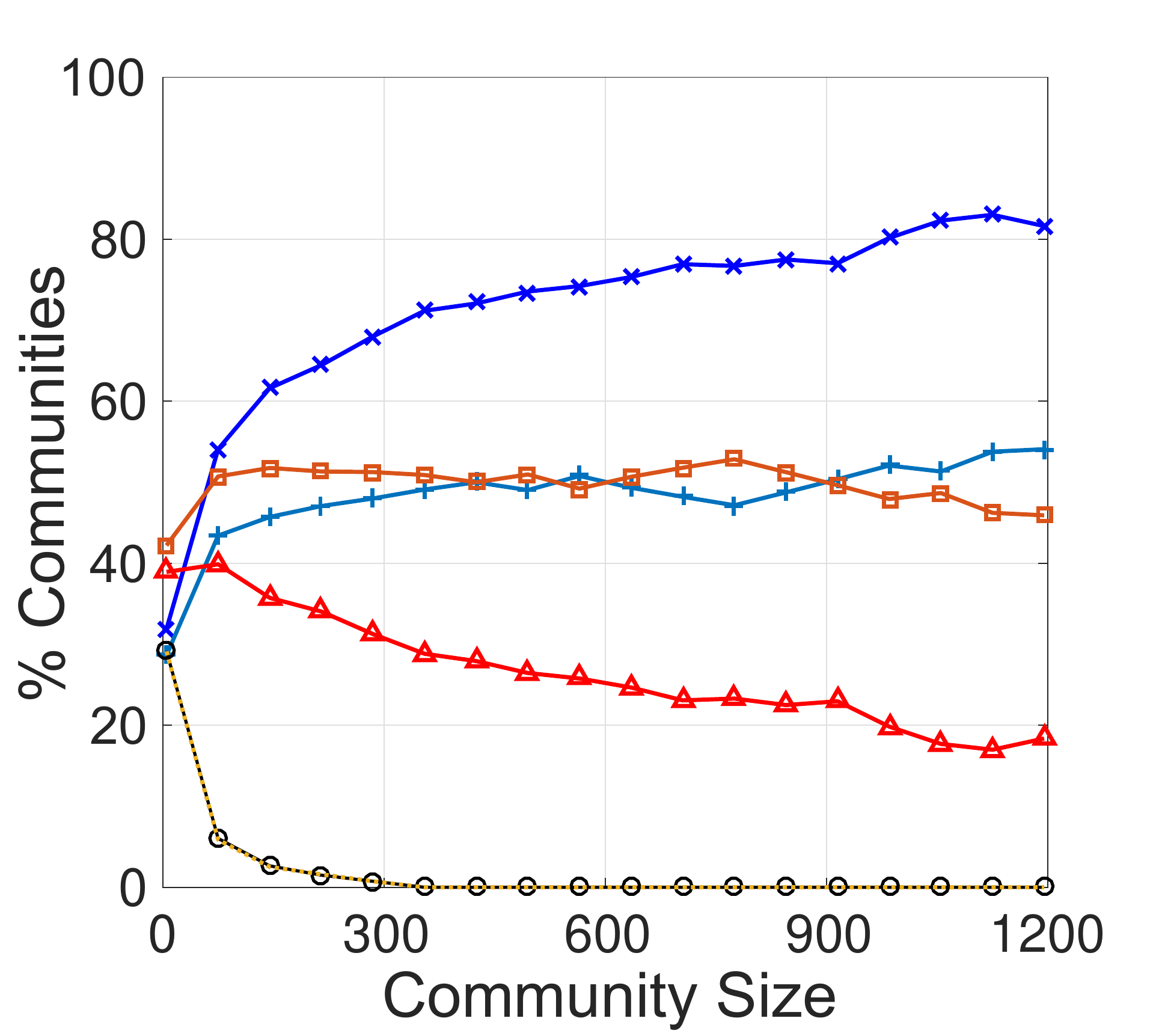}\\
  	\includegraphics[width=0.96\textwidth]{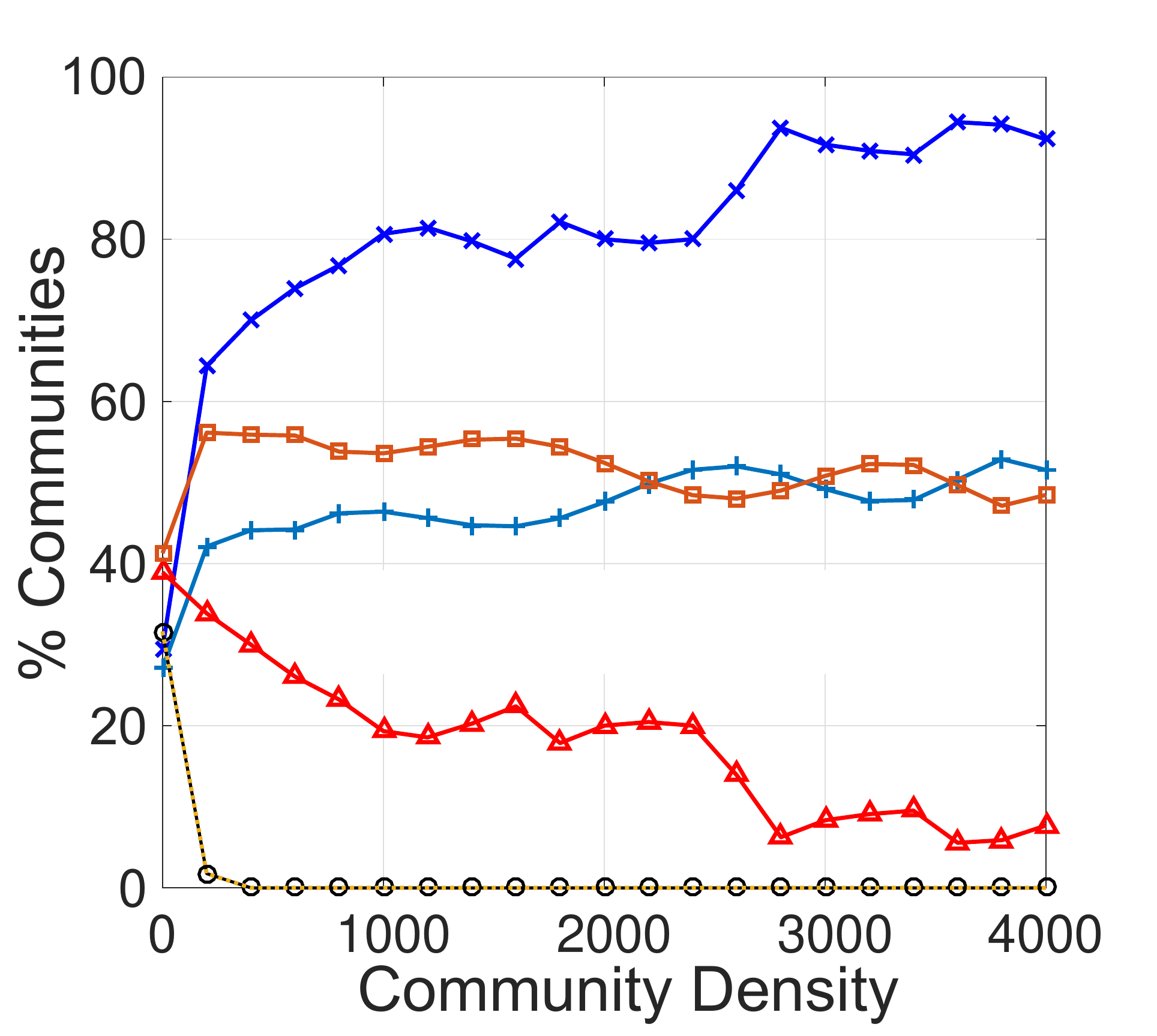}\\
  \end{minipage}}
  \begin{minipage}{0.145\textwidth}
    \centering
  	\includegraphics[width=\textwidth]{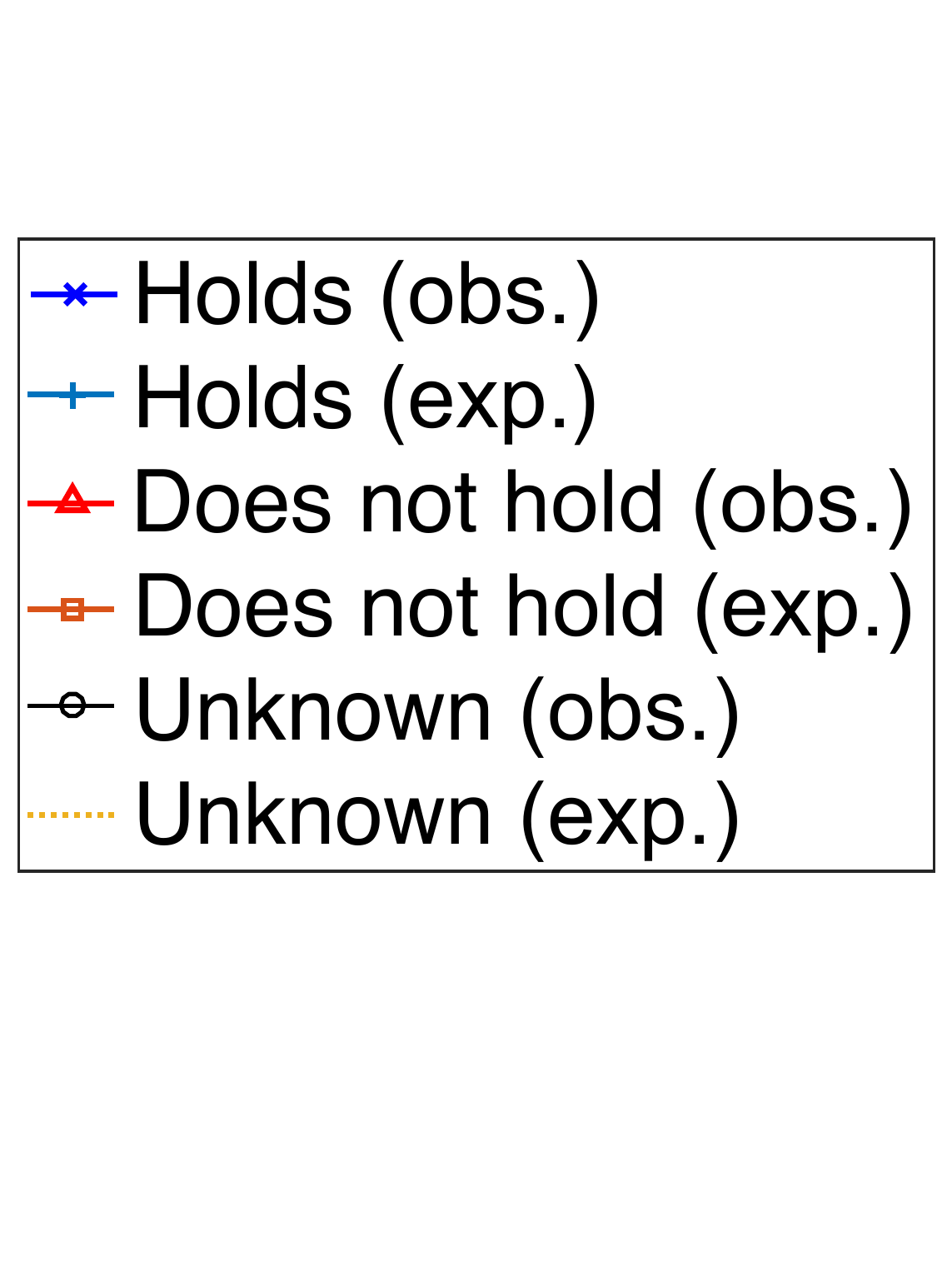}
  \end{minipage}
 \caption{Relations Between Sentiment Paradox and (i) Community Size (upper three), and (ii) Community Density (lower three). The proportion of communities within which the paradox holds becomes larger as communities become larger or denser, ultimately reaching 0.7 (and at times, over 0.9), while the expected magnitude is always around 0.5.}
 \label{fig::msc}
\end{figure*}

\section{Sentiment Paradox in Communities}
\label{sec:communityLevel}
 
Similar to triads, communities also play an important role in understanding social networks~\cite{fortunato2010community}. We take a similar approach to triad-level paradoxes and study sentiments among interacting users within communities. However, we highlight that unlike triads, communities can have different sizes (i.e., number of members) and different levels of interactions among their members (i.e., different densities). Hence, in addition to investigating whether sentiment paradox exists at the community level, we also assess whether the existence or magnitude of such paradoxes depend on the size or level of connections within communities. Similar to triads, we do not expect a sentiment paradox to exist at the community-level as proved in Theorem~\ref{thm::gsp}.

First, we assume user $u_i$, $i=1,2,\cdots,n$ is involved in communities $c_k$, $k=1,2,\cdots,p$. For each $c_k$, we compare the sentiment (i.e., the SWB value) of $u_i$ with the mean and median of that of his connections (friends, followees, or followers) \underline{within the community}. If in a majority of communities that $u_i$ belongs to, $u_i$ is less positive than his connections in the community, we denote $u_i$ is exhibiting the paradox. Finally, we compute the fraction of such users in the network and perform statistical significance analysis. Table~\ref{tab::triad_community} has the results, which we summarize as the following paradox:

\begin{myPara}[Community Sentiment Paradox]
\label{par::CNSP}{Your friends, followees, or followers within a community are more positive than you.}
\end{myPara}

Additionally, we conduct an experiment similar to the one performed to verify the common-neighbor sentiment paradox in the last section, in which we only compare sentiments between users and a subset of their connections with whom users share a community. Statistical significance is computed in the same way as before.

The results are shown in Table~\ref{tab::triad_community}. We observe a sentiment paradox within such users. As users in our dataset mostly form communities around a common interest (one community often refers to a certain topic), we denote this paradox as the \textit{common-interest sentiment paradox}:

\begin{myPara}[Common-interest Sentiment Paradox]
\label{par::community3}
Your friends, followees, or followers with whom you share some interests are more positive than you.
\end{myPara}

\vspace{0.5em}
\noindent \textbf{Impact of Community Variations.} To assess the impact of variations in communities on the sentiment paradox, we measure the paradox magnitude by changing the \emph{community size} (i.e., number of members/nodes) or \emph{community density} (i.e., number of connections/edges). We vary the community size from 1 to 1,200, and community density from 1 to (i) 2,000 in the undirected network, and (ii) 4,000 in the directed network.\footnote{Community size and density both follow a power-law-like distribution. Only around ten (less than 0.005\%) communities exist in which number of users is greater than 1,200, or friendships among users is greater than 2,000, or following and follower relationships among users is greater than 4,000.} Then, we calculate the paradox magnitude within these communities.
The results are in Figure~\ref{fig::msc}.

We observe from Figure~\ref{fig::msc} that the proportion of communities within which the paradox holds becomes larger as communities become larger or denser,  ultimately reaching 0.7 (and at times, over 0.9), while the expected magnitude is always around 0.5. Even when the community size or its density is very small, the observed proportion of communities where the sentiment paradox holds is always higher than the expected proportion, and the observed proportion of communities where the sentiment paradox does not hold is always lower than the expected values.

\section{Connections to Network Paradoxes}
\label{sec:factors}

Social networks exhibit many counter-intuitive properties. We assess the connection between sentiment paradox and two of the most commonly observed network paradoxes: (1) friendship paradox and (2) activity paradox, which provides opportunities to investigate the relationships among user sentiments, social connections and activities.

\begin{figure}[t]
 \centering
 \includegraphics[width=.48\columnwidth]{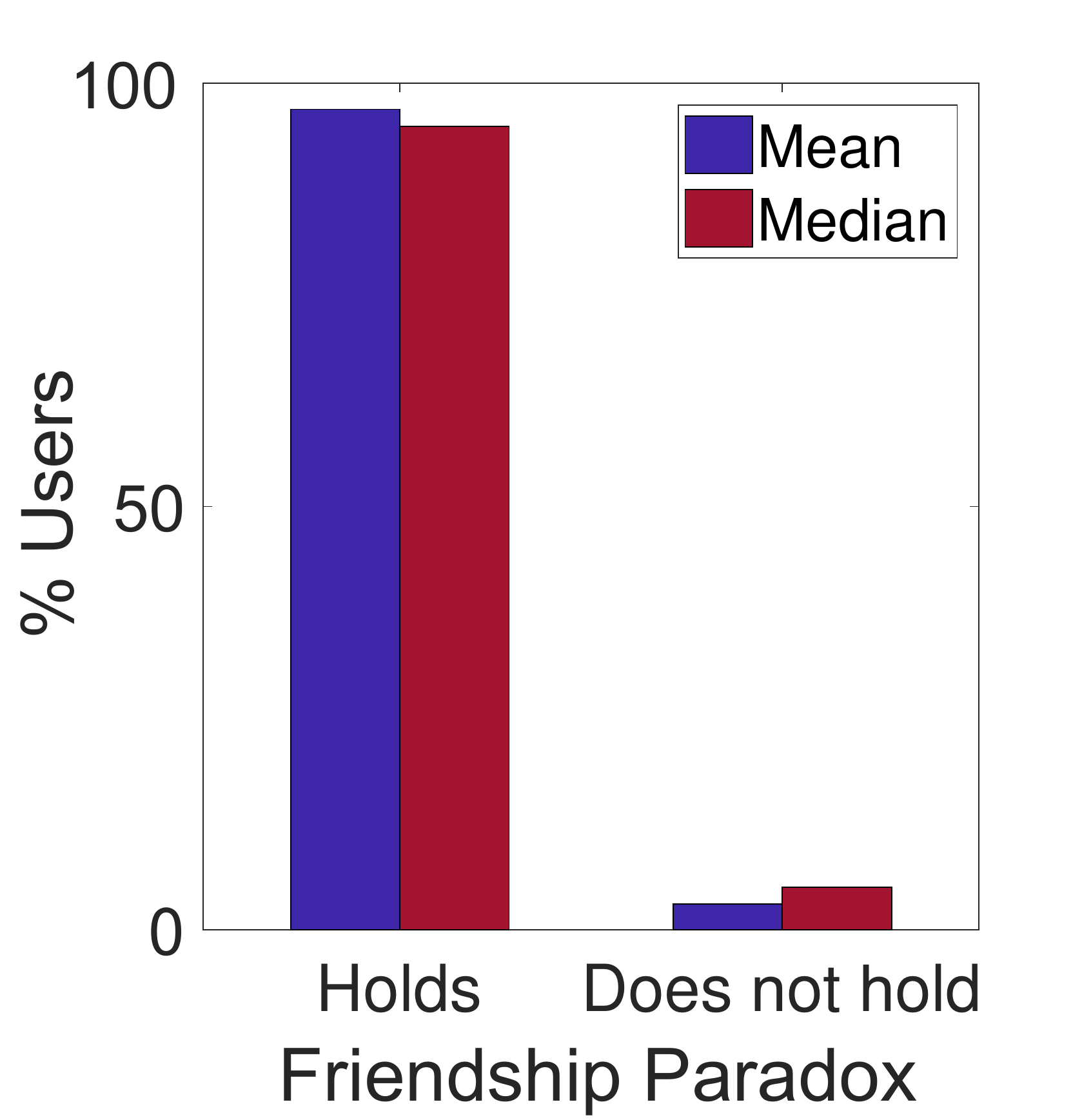}
 \includegraphics[width=.48\columnwidth]{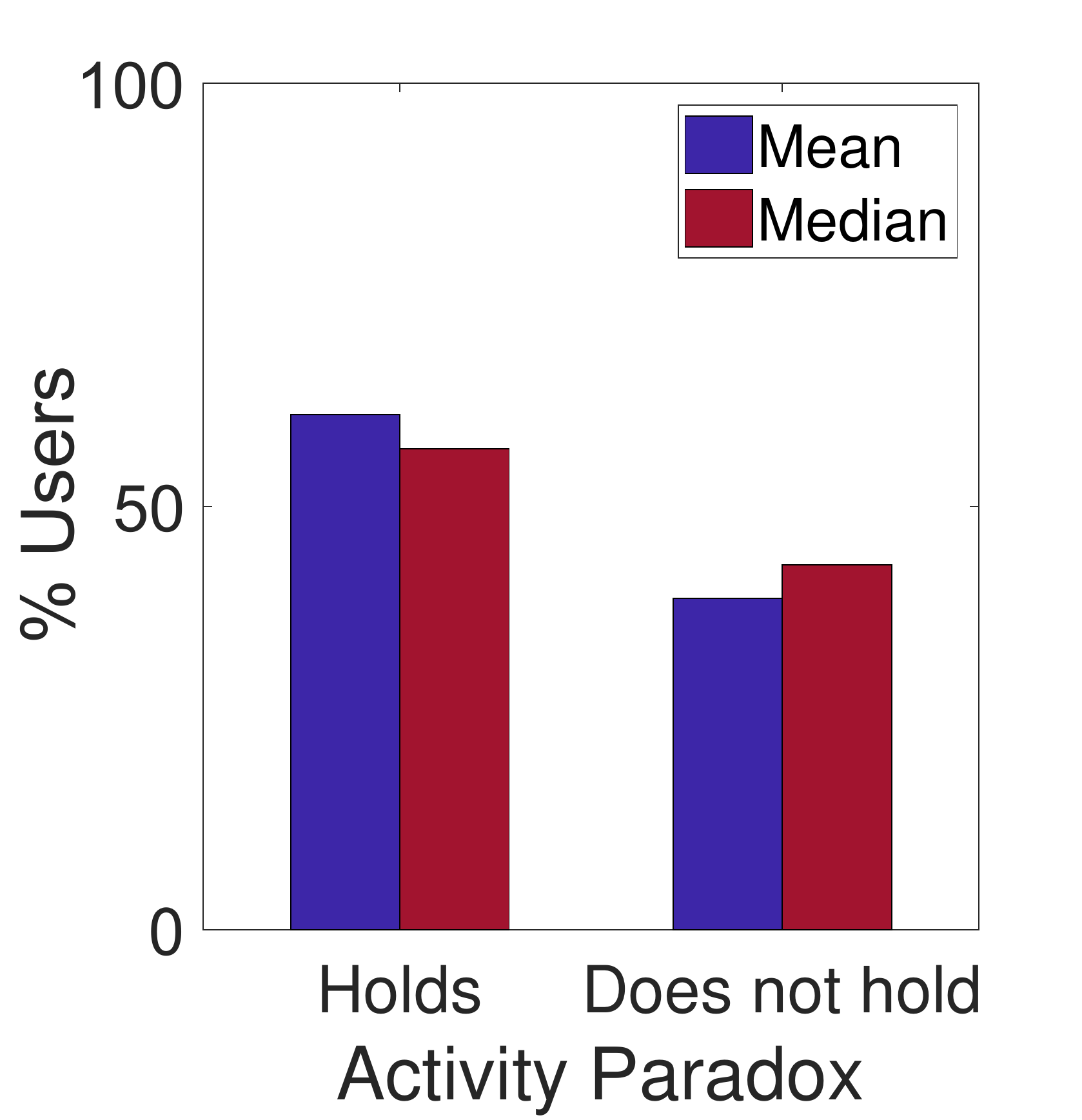}
   \caption{Friendship Paradox (left) and Activity Paradox (right).\protect\footnotemark[5] Both paradoxes hold for a majority of users and thus exist in the network.}
   \label{fig::otherParadox}
\end{figure}
\footnotetext[5]{Both network paradoxes are \underline{expected to exist} based on the mean value as the distributions of node degrees and user activity exhibit a heavy tail~\cite{kooti2014network}, which are different from that of user sentiments.}

\subsection{Friendship Paradox}

One of the most well-known network paradoxes is the friendship paradox, first observed by Feld~\cite{feld1991your}, which states that users have fewer friends than their friends, on average. 
The paradox also holds for the median value~\cite{kooti2014network}.
In our data, in addition to sentiment paradoxes at different network levels, we observe a friendship paradox for most users (both mean and median, see Figure~\ref{fig::otherParadox}).
Here, we explore the interplay between node degrees and the sentiment paradox, motivated by the following facts:
\begin{enumerate}
\item A user with degree $d$ contributes his SWB value $d$ times to the average SWB distribution of friends of users. We illustrate this fact using an example.
\begin{myExp}
\label{exp::e1}
Consider a simple undirected friendship network (see Figure~\ref{exp1}) with four users $u_1$, $u_2$, $u_3$, and $u_4$, whose corresponding SWB values are $+0.1$, $-0.2$, $-0.3$, and $-0.4$. For user $u_1$, the average SWB value of his friends is $\frac{(-0.2)+(-0.3)+(-0.4)}{3}$. The average SWB values of the friends of $u_2$, $u_3$, and $u_4$ are all $+0.1$. Thus, user $u_1$ with degree three contributes his SWB value three times (as a friend of $u_2$, $u_3$, and $u_4$) to the average SWB distribution of friends of users, while other users, with degree one, contribute their SWB values only once.
\end{myExp}

\begin{figure}[t]
    \centering
    \includegraphics[width=0.6\columnwidth]{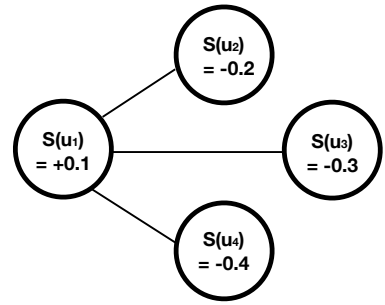}
 \caption{An Illustration for Example 1}
 \label{exp1} 
\end{figure}

\item Compared with the distribution of user sentiments (SWBs, see Figure~\ref{fig::swbDist}), the distributions of the mean and median of sentiments of friends, followees or followers of users are skewed to the right (see Figure~\ref{fig::distSkewing} for friends), i.e., the latter distributions are a weighted version of the former one, weighting those with comparatively high SWB values more.
\end{enumerate}

\begin{figure}[t]
 \centering
 \includegraphics[width=0.49\columnwidth]{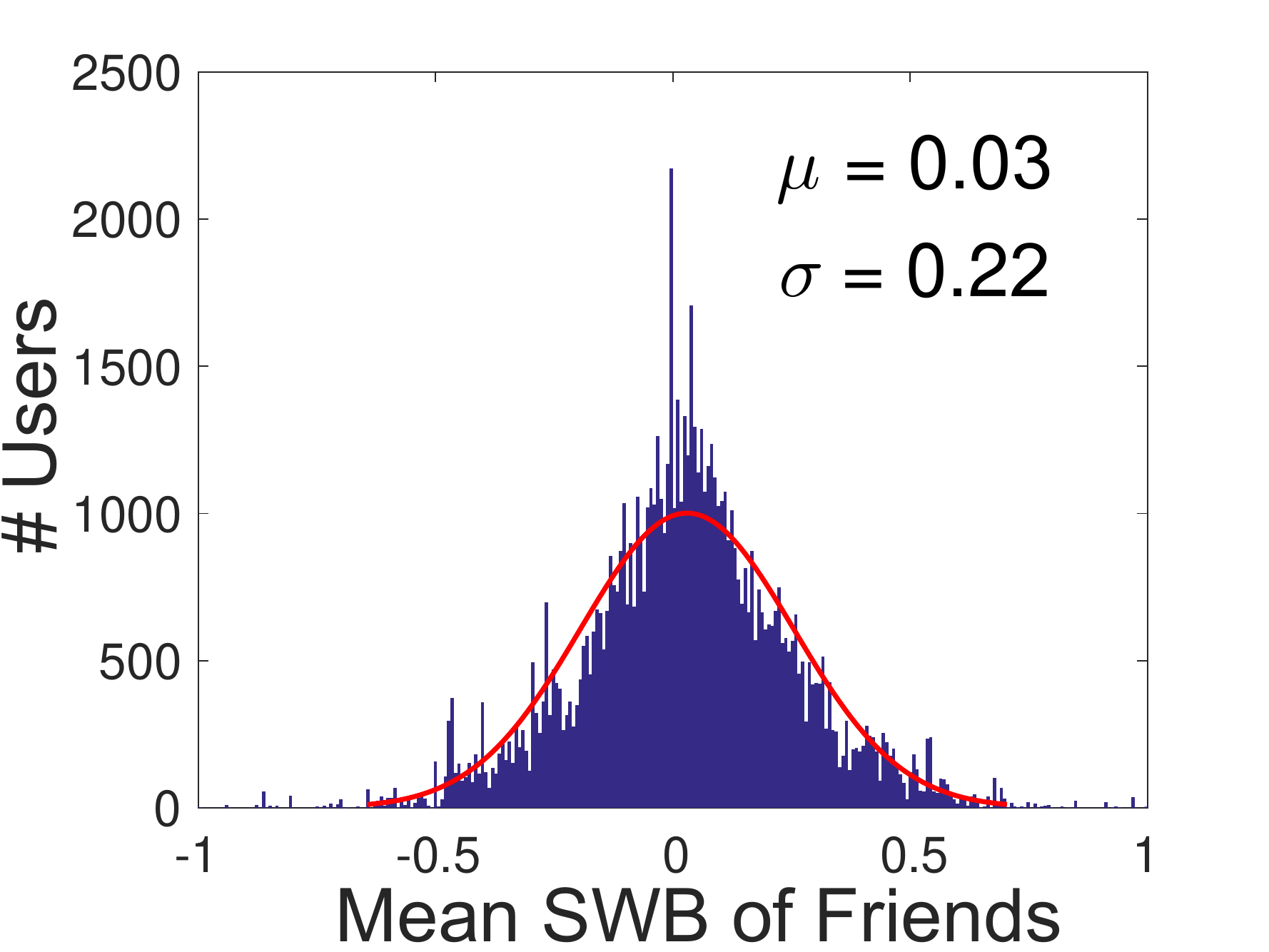}
 \includegraphics[width=0.49\columnwidth]{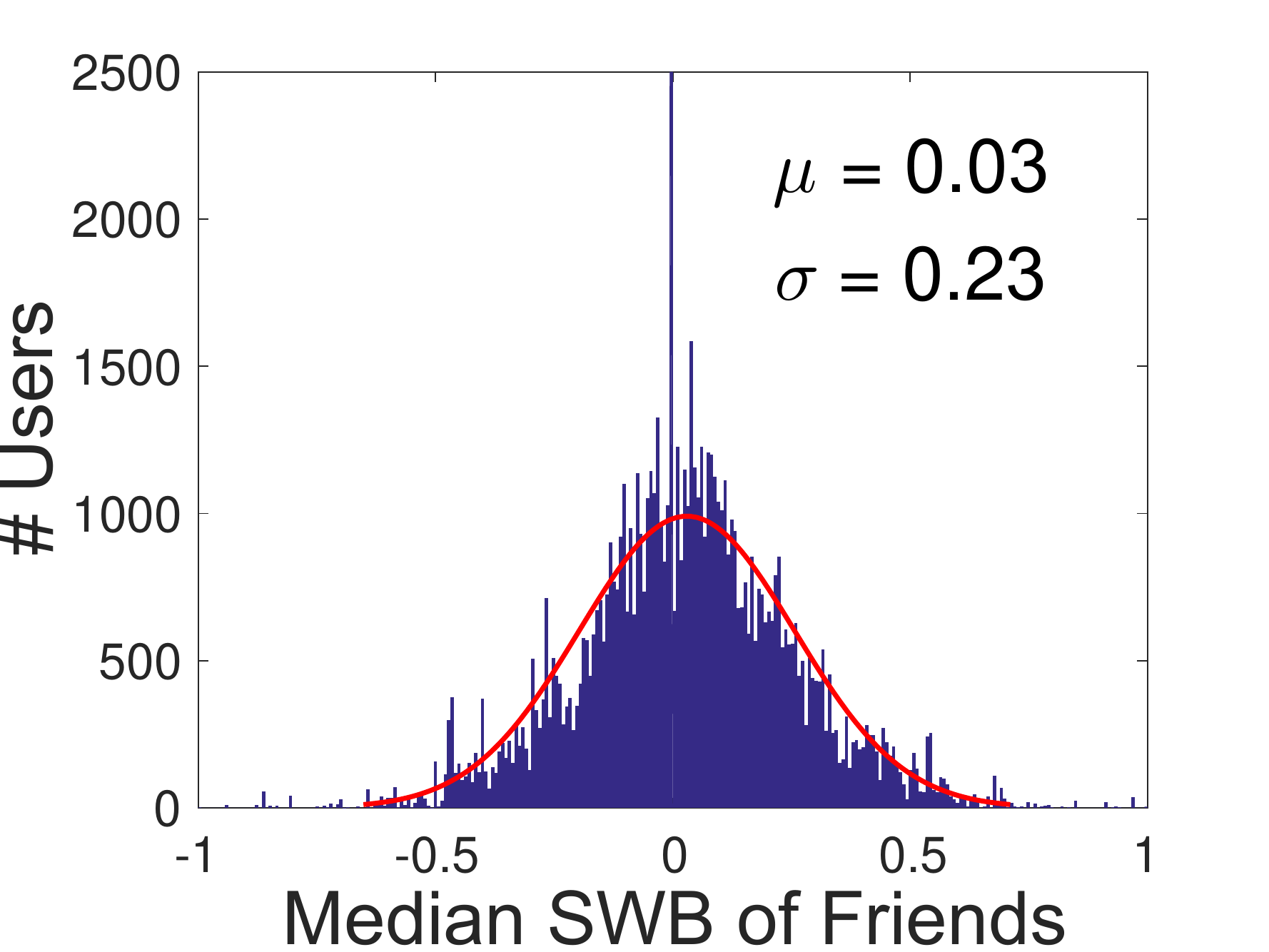}
   \caption{Distribution of Mean of Friend Sentiments (left), and Median of Friend Sentiments (right). Compared to the distribution of user sentiments (SWBs, see Figure~\ref{fig::swbDist}), the distributions of the mean and median of sentiments of friends of users are skewed to the right, i.e., $\mu$ increases.}
 \label{fig::distSkewing}
\end{figure}

Given these two facts, it is natural to study whether users with relatively high (in-, out-) degrees are more positive (i.e., have higher SWB values) than those with relatively low (in-, out-) degrees. We verify this hypothesis in two ways.

\begin{table}[t]
\centering
\caption{Correlations Between User Sentiments (SWB) and Number of Social Connections. Correlations are all positive and highly significant as $p$-values approach zero.}
\label{tab::correlation}
\begin{tabular}{lc}
\toprule[1pt]
 & \textbf{Correlation Coefficient} \\ \hline
\textbf{(SWB, \# Friends)} & 0.05 ($p$-value $\rightarrow 0$)\\
\textbf{(SWB, \# Followees)} & 0.04 ($p$-value $\rightarrow 0$)\\
\textbf{(SWB, \# Followers)} & 0.04 ($p$-value $\rightarrow 0$)\\
\bottomrule[1pt]
\end{tabular}
\end{table}

\begin{table}[t]
\centering
\caption{Average Number of Social Connections for Positive, Negative and Neutral Users. In general, positive users have (30\% to 45\%) more social connections than the others.} 
\label{tab::userDegree}
\subtable[When SWB values of users are between $-1$ and $+1$]{
\label{subtab::full}
\begin{tabular}{lccc}
\toprule[1pt]
\textbf{Users} & \textbf{Friends} & \textbf{Followees} & \textbf{Followers} \\ \hline 
{Positive ($+$)} & 5.09             & 8.01               & 8.16               \\ 
{Negative ($-$)} & 3.58             & 5.99               & 5.88               \\ 
{Neutral (0)}  & 3.70             & 5.90               & 5.80                \\ 
\hline
{Overall}       & 4.25             & 6.88               & 6.88               \\ 
\bottomrule[1pt]
\end{tabular}}
\subtable[{{When SWB values are between $-0.5$ and $+0.5$}}]{
\label{subtab::partial}
\begin{tabular}{lccc}
 \toprule[1pt]
\textbf{Users}  & \textbf{Friends} & \textbf{Followees} & \textbf{Followers} \\ \hline 
{Positive ($+$)} & 5.05             & 7.94               & 8.07               \\ 
{Negative ($-$)} & 3.67             & 6.06               & 5.96               \\ 
{Neutral (0)}  & 3.70             & 5.90               & 5.80               \\ \hline
{Overall}       & 4.27             & 6.87               & 6.88               \\ \bottomrule[1pt]
\end{tabular}}
\end{table}

\begin{figure*}[t]
 \centering
     \includegraphics[width=0.32\textwidth]{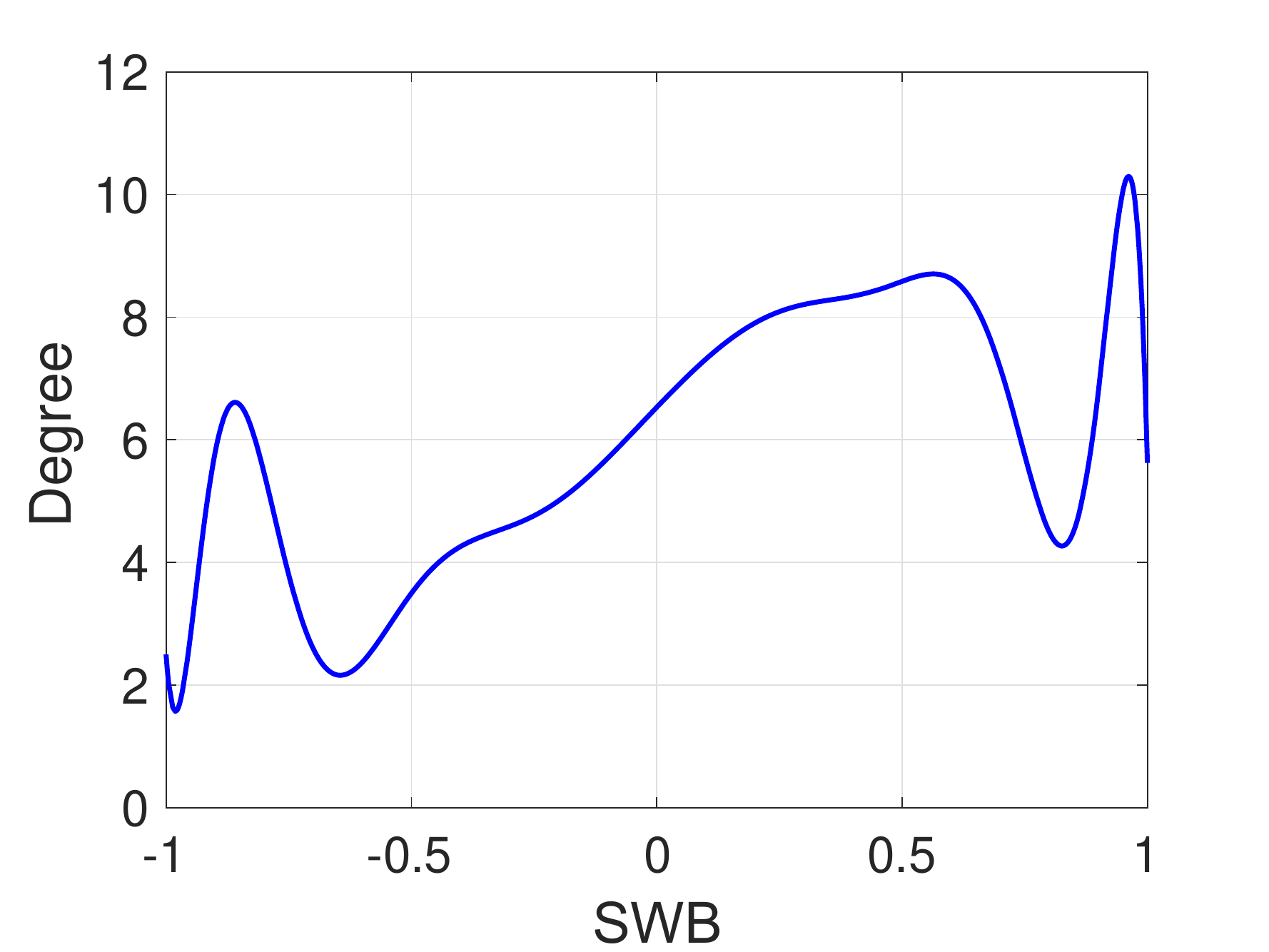} 
     \includegraphics[width=0.32\textwidth]{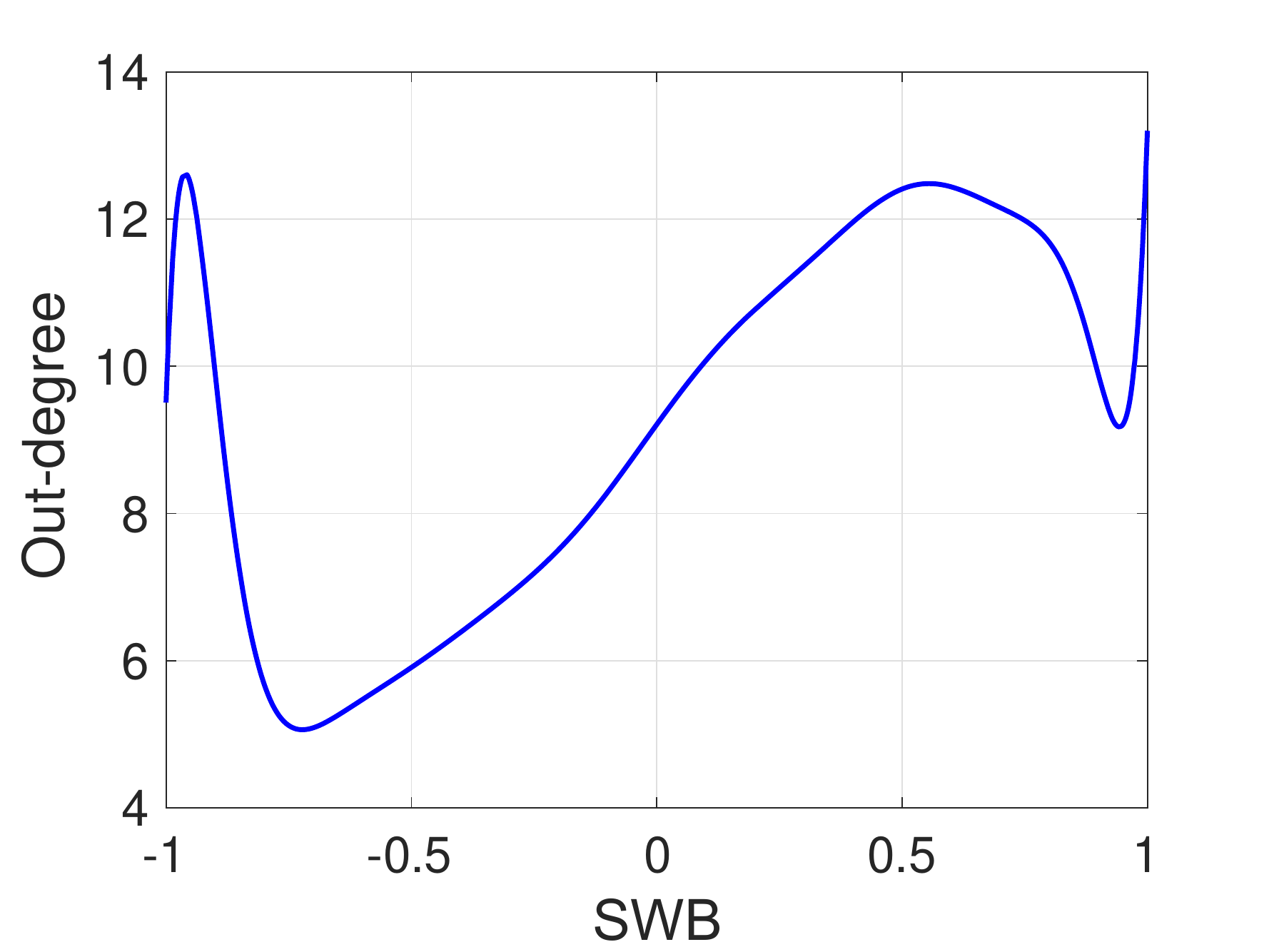} 
     \includegraphics[width=0.32\textwidth]{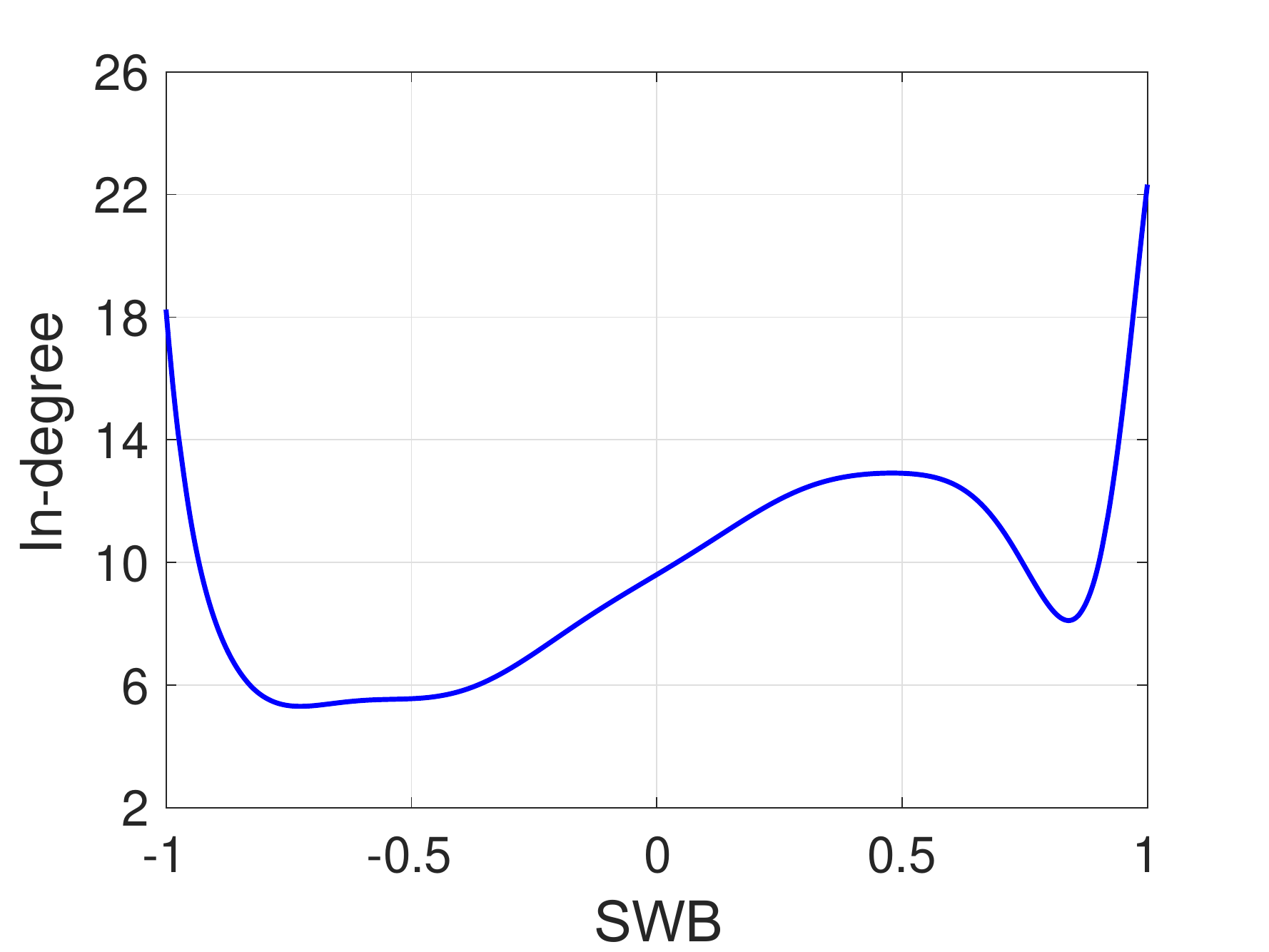} 
   \caption{Relations Between SWB and (i) Degrees (left), (ii) Out-degrees (middle), and (iii) In-degrees (right) of Users. SWB values of users have positive relations with (in-, out-) degrees, in particular, when SWB values are between $-0.5$ and $+0.5$.}
 \label{fig::swbDegree}
\end{figure*}

\vspace{0.5em}
\noindent \textbf{I.} Without labeling a user as positive, negative or neutral, we directly compute the correlation coefficient between user sentiments (SWB values) and their number of (i) friends (degrees), (ii) followees (out-degrees), and (iii) followers (in-degrees). Results are presented in Table~\ref{tab::correlation}, which indicate that the sentiments of users are positively correlated to their number of friends, followees and followers with $p$-values approach zero (i.e., results are highly significant).

We further visualize such correlations, where a least-square fit of the trend is provided in Figure~\ref{fig::swbDegree}. It further validates that the SWB value of users is positively related to their (in-, out-) degrees, especially when SWB values are between $-0.5$ and $+0.5$. Concretely, the (in-, out-) degree of users with SWB value $+0.5$ are about six more than that of users with SWB value $-0.5$. In other words, more positive users usually have more friends, followees and followers. 

Note that the group of users with extreme sentiments (i.e., whose SWB values approach $-1$ or $+1$) are not representative enough as they occupy a very small proportion (less than 7\%) in the population. In Figure~\ref{fig::swbDegree}, it can be observed that such users seem to have significantly larger degrees. However, such phenomenon can be attributed to the degree distribution, which is almost \textit{power-law} and has a heavy tail. Once one user in the group has a significantly larger degree, it easily leads to a peak when using the least-square fit. Hence, our conclusion here is obtained mainly based on users with SWB values between $-0.5$ and $+0.5$ as these users are more representative than users whose sentiments are not in this range.
\vspace{1mm}

\noindent \textbf{II.} We further consider all users and group them based on being positive ($S(u)>0$), negative ($S(u)<0$), or neutral ($S(u)=0$). The average (in-, out-) degree for users within each group is then calculated and provided in Table~\ref{subtab::full}.  Table~\ref{subtab::full} shows that positive users have more friends, followees, and followers compared to the other users, which is also above the averages computed for all users. In particular the number of friends, followees, and followers of positive users are on average 30\% to 45\% greater than that of negative users. Additionally, we also compute the average (in-, out-) degree of users whose SWB values are between $-0.5$ and $+0.5$. The results are shown in Table~\ref{subtab::partial}, which leads to the same conclusion.

\begin{figure}[t]
 \centering
   \includegraphics[width=.68\columnwidth]{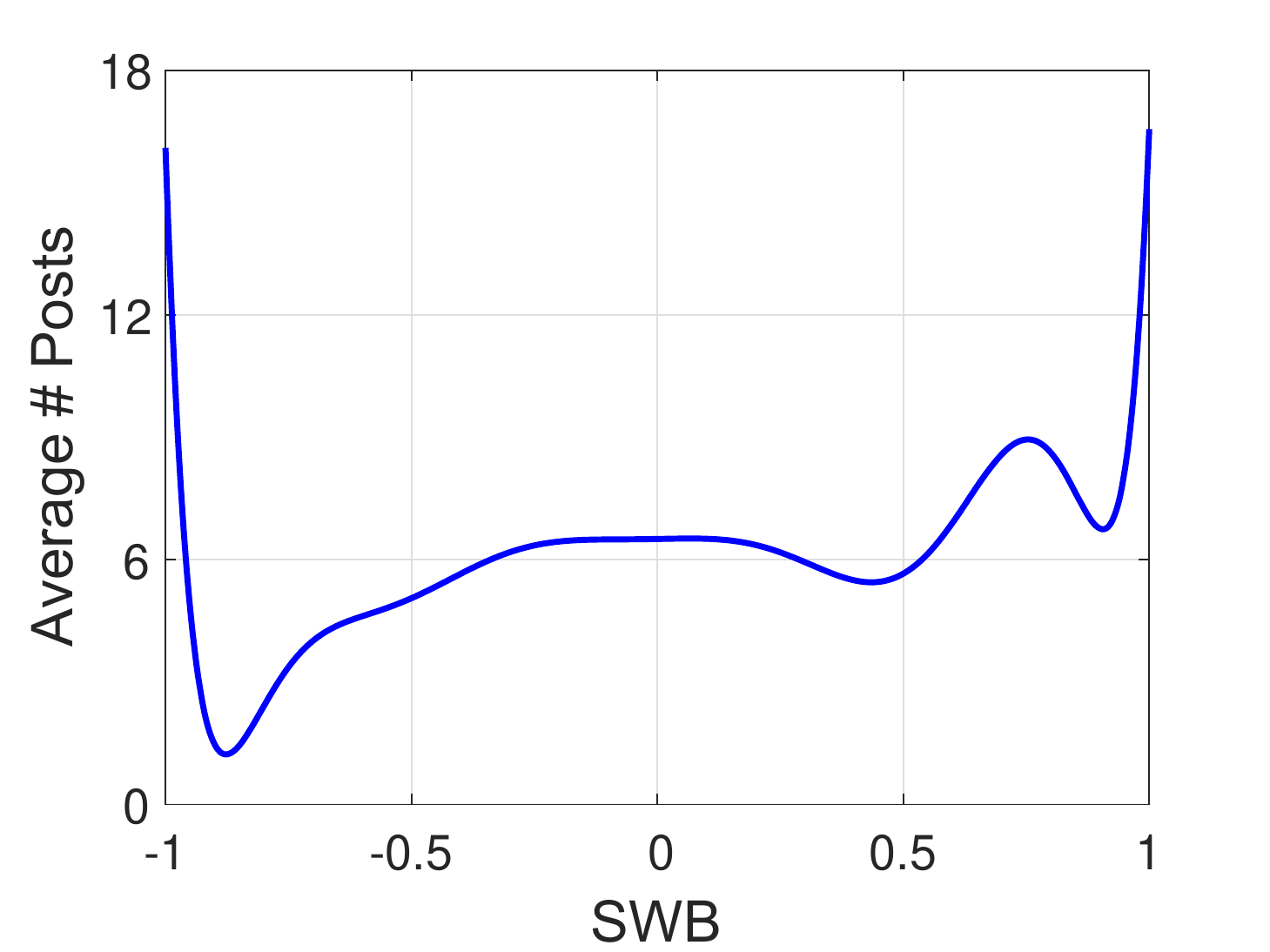} 
   \caption{Relationship Between User Sentiments (i.e., SWB) and Activity. User activity (i.e., the average number of user posts per 30 days) is rarely affected by user SWB values between $-0.5$ and $+0.5$.}
 \label{fig::swbActivity}
\end{figure}

\begin{table*}[t]
\Huge
\centering
\caption{Feature List}
\label{tab::feature}
\resizebox{\textwidth}{!}{
\begin{tabular}{ll}
\toprule[2pt] 
\textbf{Feature Group (\# Features)} & \textbf{Description} \\ \midrule[1pt]
{General Sentiment Paradox (6)} & 
Mean and median of SWB values of one's social connections (friends, followees, or followers) \\
{Triad Sentiment Paradox (6)} & 
Mean and median of SWB values of one's social connections in a triad \\
{Common-neighbor Sentiment Paradox (6)} & 
Mean and median of SWBs of one's social connections with whom he shares common neighbors  \\
{Community Sentiment Paradox (6)} & 
Mean and median of SWB values of one's social connections in a community \\
{Common-interest Sentiment Paradox (6)} & 
Mean and median of SWBs of one's social connections with whom he shares common interests  \\
{{Friendship Paradox (9)}}  & The number of degrees, in-degrees and out-degrees of oneself \& \\ 
 & Mean and median of degrees, in-degrees and out-degrees of one's social connections \\
 \bottomrule[2pt]
\end{tabular}}
\end{table*}

\subsection{Activity Paradox}
In social networks such as {Twitter}~\cite{hodas2013friendship} and {Digg}~\cite{kooti2014network}, researchers have discovered the existence of an activity paradox: users are less active than their friends, on average. We observe an activity paradox, less strongly than friendship paradox, in our data (see Figure~\ref{fig::otherParadox}), which inspires us to explore the potential relationships between user activity and sentiments. We quantify user activity as follows:

\begin{myDef}[User Activity]
\label{def::userActivity}
Suppose user $u$ has posted $n$ posts in a social network, where the first post was published on date $d_1$ and the last one was posted on date $d_n$. The activity of user $u$ is defined as
\begin{equation}
A(u) = \frac{\Delta{t}}{d_n-d_1}n,
\end{equation}
where $\Delta{t}$ is size of time window where we measure activity. 
\end{myDef}

Note that $\frac{d_n-d_1}{\Delta{t}}$ indicates how many $\Delta{t}$'s (e.g., months) a user has been active on the network and $A(u)$ indicates the average number of posts of user $u$ in $\Delta{t}$ period.

The relation between user sentiments (i.e., SWB	values) and activity (i.e., the average number of posts of users per $\Delta{t}$=30 days) is shown in Figure~\ref{fig::swbActivity}. We observe that the value of $\Delta{t}$ does not influence the result. There seems to be a slight positive correlation; however, the number of user posts is rarely affected by the user SWB values if between $-0.5$ and $+0.5$, which covers 93\% of our users. Therefore, we do not consider user activity to have a significant impact on sentiment paradox.

\begin{table}[t]
    \centering
    \caption{Distribution of Positive, Negative and Neutral Users}
    \label{tab::userSentiments}
    \begin{tabular}{lrr}
    \toprule[1pt]
       \textbf{User}         & \textbf{Number}  & \textbf{Proportion} \\ \hline
       Positive ($+$) & 50,705  & 43.92\%  \\
       Negative ($-$) & 61,066  & 52.90\%  \\
       Neutral ($0$)  & 3,673   & 3.18\%   \\ \hline
       Total        & 115,444 & 100.00\% \\ \bottomrule[1pt]
    \end{tabular}
\end{table}

\section{User Sentiment Prediction}
\label{sec:application}

Sentiment paradoxes reveal a certain relationship between users and their social connections (friends, followees, or followers) at triad-, community- and network-level, i.e., in general, users are less positive than their social connections. In this section, we demonstrate how a user's sentiment (positive or negative) can be predicted by investigating the general sentiments of his social connections at triad-, community- and network-level. Before the elaboration, we provide the distribution of positive, negative and neutral users in Table~\ref{tab::userSentiments}.

To predict user sentiments (positive or negative), we regard it as a binary classification problem to be addressed within a supervised machine learning framework. Specifically, we represent each user as a set of machine learning features. Features are inspired by the validated five sentiment paradoxes, and the friendship paradox which has been validated to be correlated to the sentiment paradox. Features are presented in Table~\ref{tab::feature}. Then, several common supervised classifiers are trained and used to predict user sentiments (positive or negative) based on ten-fold cross-validation. Results are evaluated by accuracy (ACC) and AUC value.

Table~\ref{tab::results} provides the overall results. Results are obtained by using XGBoost~\cite{chen2016xgboost}, which performs best among supervised classifiers including logistic regression, decision trees, na\"ive Bayes, random forests, and Support Vector Machine (SVM) - see Table~\ref{tab::classifiers} for their performance comparison. Results in Table~\ref{tab::results} indicate that (1) among single sentiment paradoxes, the general one performs best in predicting user sentiments; (2) when combining all sentiment paradoxes, it outperforms when separately using single ones; and (3) in general, using all features (five sentiment paradoxes plus the correlated friendship paradox) perform best, which can achieve 62\% accuracy ratio and 60\% AUC value.

\begin{table}[t]
\centering
\caption{Performance Comparison by using Various Supervised Classifiers in Predicting User Sentiments. XGBoost performs best among all selections.}
\label{tab::classifiers}
\begin{tabular}{lcc}
\toprule[1pt]
\textbf{Classifier} & \textbf{ACC} & \textbf{AUC} \\ \hline
{Logistic Regression} & .613 & .590 \\
{Decision Tree} & .596 & .580 \\
{Na\"ive Bayes} & .600 & .580 \\
{Random Forest} & .590 & .580 \\ 
{SVM} & .587 & .573 \\
{XGBoost} & .620 & .600 \\
\bottomrule[1pt]
\end{tabular}
\end{table}

\begin{table}[t]
\huge
\centering
\caption{Performance Comparison by using Various Feature Groups in Predicting User Sentiments. Among all single sentiment paradoxes, the general one performs best. When combining all sentiment paradoxes outperforms when separately using single ones.}
\label{tab::results}
\resizebox{\columnwidth}{!}{
\begin{tabular}{lcc}
\toprule[1.5pt]
\textbf{Feature Group} & \textbf{ACC} & \textbf{AUC} \\ \hline
{General Sentiment Paradox} & .601 & .581 \\
{Triad Sentiment Paradox} & .589 & .568 \\
{Common-neighbor Sentiment Paradox} & .592 & .571 \\
{Community Sentiment Paradox} & .590 & .569 \\
{Common-interest Sentiment Paradox} & .589 & .569 \\ 
{All Sentiment Paradoxes} & .617 & .600 \\
{All Sentiment Paradoxes + Friendship Paradox} & {.620} & {.600} \\
\bottomrule[1.5pt]
\end{tabular}}
\end{table}

\section{Related Work}
\label{sec::review}
Numerous studies have looked at network paradoxes, especially, friendship paradox. For example, friendship paradox has been observed in many online (e.g., Quora~\cite{iyer2018friendship} and Twitter~\cite{hodas2013friendship}) and offline networks~\cite{pires2017friendship}. Kooti et al.~\cite{kooti2014network} have observed and proved that friendship paradox must exist, based on the mean value, in social networks as node degrees always follow heavy-tail distributions.  
A recent study shows friendship paradox can help identify popular users by connecting it with friendship strength among users~\cite{bagrow2017friends}. 
Nettasinghe and Krishnamurthy utilize friendship paradox to design randomized polling methods for social networks~\cite{nettasinghe2018your}.

In addition to friendship paradox, recent literature has focused on the explorations of other network paradoxes such as user activity paradox, happiness paradox ~\cite{bollen2017happiness}, and scientific collaboration paradox indicating that researchers always have fewer coauthors, citations, publications, and lower h-index than their collaborators~\cite{fotouhi2014generalized,benevenuto2016h,eom2014generalized}. The development of these non-friendship paradoxes, however, is in an early stage, whose potential interpretations for their existence and applications have rarely investigated.

\section{Conclusion}
\label{sec:conclusions}

This work is motivated by the limitation of current sentiment analysis studies that have not considered interacting users in social networks, and by the phenomenon that people often consider their friends to be more positive than themselves, often attributed to human cognition biases in psychology. 
We present five sentiment paradoxes at the triad-, community- and network-level, all empirically and mathematically validated in undirected (i.e., with friendships) and directed (i.e., with follower and followee relationships) networks. Through studying the relations between the sentiment paradox and various characteristics of networks and users, we observe that (i) sentiment distributions determine the expected (non-) existence of sentiment paradoxes; (ii) node degrees (i.e., the number of social connections of users) is positively correlated to user sentiments; and (iii) there is no clear pattern between user sentiments and user activity. 
These connections (though not causal) can be responsible for the existence and magnitude of sentiment paradoxes in social networks, which cannot be solely attributed to human cognition bias as they generally exist in social networks. Additionally, we firstly demonstrate the application of our findings in predicting user sentiment prediction.
In the future, we will further analyze causal relationships between user's connections (degrees) and sentiments. Sentiment paradoxes in dynamic social networks as well as the ``like' and ``comments" networks will be part of our future studies.

\bibliographystyle{aaai}
\bibliography{references}

\section{Appendix}

\subsection{Post Distribution of Inactive Users}
In our experiments, we only retain users with ten or more posts to exclude occasionally active or inactive users. The post distribution of these excluded users is presented in Figure~\ref{fig:postDist}. The distribution indicates that a substantial number of users being not considered in our study has posted nothing.

\subsection{Illustration of User Post}
When posting blogs on LiveJournal, users can explicitly report their sentiments by selecting a mood. An illustration can be seen in Figure \ref{fig:post}, where the mood is \textit{Chipper}.

\subsection{Sentiment Polarity Identification of Moods}
There are 132 moods available on LiveJournal. The sentiment polarity (positive, neutral, or negative) of these moods is determined as shown in Table \ref{tab:mood}.

\begin{figure}[ht]
    \centering
    \includegraphics[width=0.4\textwidth]{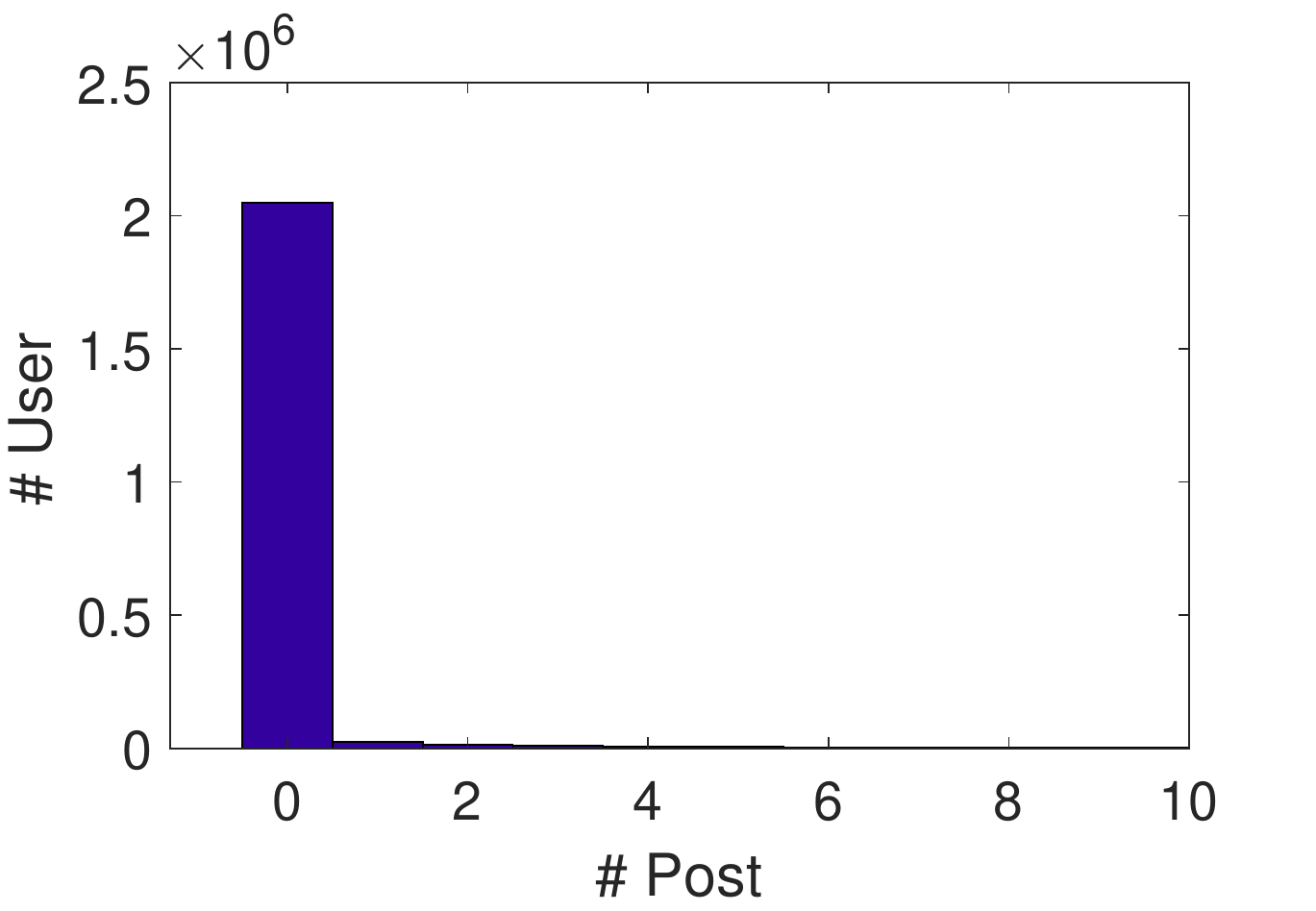}
    \caption{Post Distribution of Inactive Users} \label{fig:postDist}
\end{figure}

\begin{figure}[ht]
    \centering
    \includegraphics[width=0.45\textwidth]{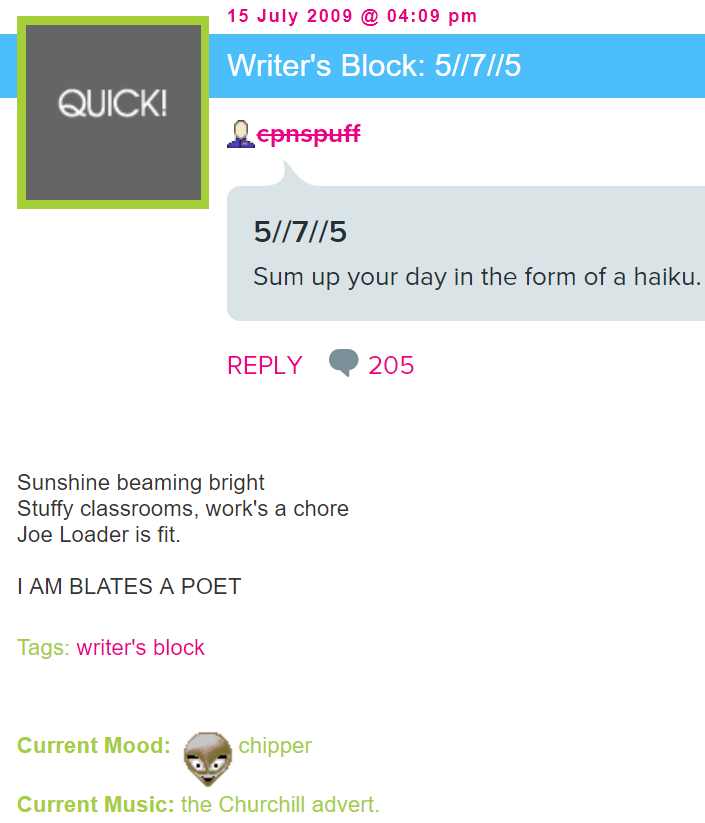}
    \caption{An Illustrated User Post with Mood \textit{Chipper}} \label{fig:post}
\end{figure}

\begin{table}[ht]
\small
\centering
\caption{Moods and Their Sentiment Polarity} \label{tab:mood}
\begin{tabular}{ll}
\toprule[1pt]
 & \multicolumn{1}{c}{\textbf{Mood}} \\ \hline
\multirow{7}{*}{\rotatebox{90}{\textbf{Positive}}} & amused; accomplished; artistic; bouncy; calm; cheerful; \\
 & content; creative; complacent; determined; excited; \\
 & ecstatic; energetic; full; good; giggly; grateful; happy; \\
 & hopeful; high; impressed; jubilant; loved; peaceful; \\
 & productive; pleased; rejuvenated; sympathetic; satisfied;\\
 & thankful; thoughtful; working; \\ \hline
\multirow{6}{*}{\rotatebox{90}{\textbf{Neutral}}} & awake; blah; blank; busy; chipper; contemplative; ditzy; \\
 & dorky; drained; drunk; flirty; geeky; groggy; horny; hot; \\
 & hyper; indescribable; intimidated; mellow; nerdy; okay; \\
 & optimistic; recumbent; refreshed; relaxed; rushed;  \\
 & shocked; sleepy; surprised; \\ \hline
\multirow{14}{*}{\rotatebox{90}{\textbf{Negative}}} & aggravated; angry; annoyed; anxious; apathetic; bitchy; \\
 & bored; cold; confused; cranky; crappy; crazy; crushed; \\
 & curious; cynical; depressed; devious; dirty; disappointed; \\
 & discontent; distressed; embarrassed; enthralled; envious; \\
 & exanimate; enraged; exhausted; frustrated; giddy; \\
 & gloomy; grumpy; guilty; hungry; indifferent; infuriated; \\
 & irate; irritated; jealous; lazy; lethargic; listless; lonely; \\
 & melancholy; mischievous; moody; morose; naughty; \\
 & nauseated; nervous; nostalgic; numb; pessimistic; pissed \\
 & off; pensive; predatory; quixotic; rejected; relieved;  \\
 & restless; sad; scared; sick; silly; sore; stressed; thirsty;  \\
 &  tired; touched; uncomfortable; weird; worried; \\
\bottomrule[1pt]
\end{tabular}
\end{table}

\end{document}